\newtheorem{theorem}{Theorem}
\newtheorem{lemma}{Lemma}
\newtheorem{corollary}{Corollary}
\newtheorem{definition}{Definition}
\newtheorem{conjecture}{Conjecture}
\newtheorem{proposition}{Proposition}
\newtheorem{remark}{Remark}
\newtheorem{assumption}{Assumption}
\newcommand{\TIME}{\mathsf{TIME}}
\newcommand{\SPACE}{\mathsf{SPACE}}
\DeclareMathOperator{\polylog}{polylog}
\DeclareMathOperator*{\argmax}{arg\,max}
\newcommand{\poly}{\mathsf{poly}}
\algrenewcommand\algorithmicrequire{\textbf{Input:}}
\algrenewcommand\algorithmicensure{\textbf{Output:}}
\algnewcommand\algorithmicto{\textbf{to}}
\algnewcommand\algorithmicforeach{\textbf{for each}}
\newcommand{\Front}{\mathrm{Front}}
\newcommand{\degmax}{\Delta_{\max}}
\newcommand{\pred}{\mathsf{pred}}
\title{\Large \textbf{Universal Hirschberg for Width Bounded Dynamic Programs}\\[0.5in]}
\author{\Large Logan Nye, MD\\[0.3in]
Carnegie Mellon University School of Computer Science\\[0.1cm]
5000 Forbes Ave Pittsburgh, PA 15213 USA\\[0.5cm]
\texttt{lnye@andrew.cmu.edu}\\[0.2cm]
\small ORCID: \href{https://orcid.org/0009-0002-9136-045X}{0009-0002-9136-045X}}
\date{}
\begin{document}

\maketitle
\vspace{0.8in}

\begin{center}
    \Large \textbf{Abstract}
\end{center}
\vspace{0.3in}
Hirschberg’s algorithm (1975) reduces space complexity for the Longest Common Subsequence problem from $O(N^2)$ to $O(N)$ via recursive midpoint bisection on a grid DP. We show that the underlying idea generalizes to a broad class of dynamic programs with local dependencies on directed acyclic graphs (DP DAGs). Modeling a DP as deterministic time-evolution over a topologically ordered DAG with frontier width $\omega$ and bounded in-degree, and assuming a max-type semiring with deterministic tie-breaking, we prove that in a standard offline random-access model any such DP admits deterministic traceback in space $O(\omega \log T + \polylog T)$ cells over a fixed finite alphabet, where $T$ is the number of states. Our construction replaces backward DPs by forward-only recomputation and organizes the time order into a height-compressed recursion tree whose nodes expose small ``middle frontiers” across which every optimal path must pass. The framework yields near-optimal traceback bounds for asymmetric and banded sequence alignment, one-dimensional recurrences, and DP formulations on graphs of bounded pathwidth. We also show that an $\Omega(\omega)$ space term (in bits) is unavoidable in forward single-pass models and discuss conjectured $\sqrt{T}$-type barriers in streaming settings, supporting the view that space-efficient traceback is a structural property of width-bounded DP DAGs rather than a peculiarity of grid-based algorithms.

\newpage

\section{Introduction}

Dynamic programming (DP) is a central paradigm for discrete optimization and combinatorial problems. Given a DP formulation on a directed acyclic graph (DAG), we typically distinguish two tasks:

\begin{itemize}
  \item \emph{Value computation:} compute the optimal value $x_t$ at a designated sink $t$;
  \item \emph{Witness reconstruction:} reconstruct a specific optimal witness: e.g., a path from a source to $t$, an alignment, a labeling, or other structured solution.
\end{itemize}

For many DP formulations---on grids, banded lattices, and pathwidth- or treewidth-based decompositions---the value $x_t$ can be computed in space proportional to a structural width parameter, such as a frontier width $\omega$ or a pathwidth $\mathrm{pw}(G)$, by storing only a single frontier of DP states at a time. In contrast, witness reconstruction is usually implemented by storing the entire DP table or a large fraction of it, requiring $\Theta(T)$ space where $T$ is the number of states. This raises a question:

\begin{quote}
  \emph{For width-bounded DPs where the decision problem can be solved in $O(\omega)$ space, how much additional space is actually needed to recover a single optimal witness?}
\end{quote}

\paragraph{Hirschberg as a motivating example.}
A classical exception to the $\Theta(T)$ traceback pattern is Hirschberg's algorithm for LCS/edit distance~\cite{Hirschberg1975}. On an $m \times n$ grid DP ($T = \Theta(mn)$, frontier width $\omega = \min\{m,n\}$ under a natural row- or column-major topological order), naive traceback requires $O(mn)$ space. Hirschberg's divide-and-conquer algorithm recursively bisects along the longer dimension, uses forward and backward DPs to identify a midpoint on some optimal alignment, and recurses on the two subproblems. This reconstructs an optimal alignment in $O(m+n)$ space. The standard presentation is tightly coupled to the 2D grid geometry and the fact that the reverse DP has the same frontier width as the forward DP.

\paragraph{This work: a structural ``Universal Hirschberg'' theorem.}
We generalize Hirschberg's idea from grids to arbitrary \emph{time-ordered} DP DAGs with bounded frontier width, in an offline model with random access to the instance and inexpensive recomputation. Informally:

\begin{theorem}[Informal main result]
Let $G$ be a DP DAG on $T$ vertices equipped with a fixed topological order $\tau$, frontier width $\omega = \omega(G,\tau)$, bounded in-degree, and a max-type semiring recurrence with deterministic tie-breaking, so that each vertex has a unique witness predecessor. In the standard random-access multitape Turing-machine (or RAM) model, there is a deterministic algorithm that, given an encoding of $G$ and a designated sink $t$,
\begin{itemize}
  \item computes $x_t$ and outputs the canonical optimal source-to-$t$ witness path (as determined by the fixed tie-breaking rule),
  \item runs in time $\poly(T)$, and
  \item uses work-tape space
  \[
    \SPACE_{\mathrm{traceback}} \;=\; O\bigl(\omega \log T \;+\; \polylog T\bigr)
  \]
  measured in tape cells over a fixed finite alphabet. Equivalently, the algorithm stores the values of $O(\omega)$ DP states (each encoded in $O(\log T)$ bits) plus an additional $O(\polylog T)$ words of recursion metadata.
\end{itemize}
\end{theorem}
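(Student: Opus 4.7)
The plan is to generalize Hirschberg's divide-and-conquer from the 2D grid to an arbitrary topologically ordered DP DAG, replacing the symmetric backward DP---which need not have small frontier width---with a \emph{forward-only witness-tag propagation}. Assume without loss of generality a unique source $s$ (otherwise prepend a virtual source with zero-weight edges to all real sources), and design a recursion $\mathrm{FindPath}(a, b, v_a, v_b)$ that emits the canonical optimal path from $v_a$ (at time $a$) to $v_b$ (at time $b$) inside the sub-DAG induced by times $[a,b]$; the top-level call is $\mathrm{FindPath}(0, T, s, t)$.

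Within a single call, set $m = \lfloor (a+b)/2 \rfloor$ and run a forward DP on the sub-DAG with $x_{v_a} = 0$ and $-\infty$ on the rest of the time-$a$ frontier. The rolling frontier holds at most $\omega$ values and thus $O(\omega \log T)$ tape cells. At time $m$, copy out the entire frontier $F_m$ (another $O(\omega \log T)$ cells). Then continue the forward DP from $m$ to $b$, augmenting every frontier state $w$ with a tag $\pi(w)$ that is an index into $F_m$: for $w \in F_m$ set $\pi(w) = w$, and for any later $w$ set $\pi(w) = \pi(u^\ast)$, where $u^\ast$ is the unique witness predecessor of $w$ selected by the deterministic tie-break. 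Each tag fits in $O(\log T)$ bits, so the augmented frontier still occupies $O(\omega \log T)$ cells. When the DP reaches $v_b$, read $v^\ast = \pi(v_b)$: by a short induction along $\tau$ this is exactly the vertex in $F_m$ through which the canonical $v_a \!\to\! v_b$ path passes. Release all DP state, keep only the constant-sized tuple $(a, b, v_a, v_b, m, v^\ast)$, and recurse on $[a, m]$ with endpoints $(v_a, v^\ast)$ followed by $[m, b]$ with endpoints $(v^\ast, v_b)$, emitting path segments in order. The base case $b - a \le 1$ is a direct edge lookup.

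For the space bound, the key point is that after locating $v^\ast$ the parent frame retains only $O(\log T)$ bits of indices before dispatching its recursive call. Hence at any instant exactly one recursive frame performs a forward DP, contributing the dominant $O(\omega \log T)$ cells, while the $O(\log T)$ ancestor frames together contribute only $O(\log^2 T) = O(\polylog T)$ cells of recursion metadata, for a total of $O(\omega \log T + \polylog T)$. Correctness of the recursion rests on two facts: (i) the canonical $v_a \!\to\! v_b$ path in the full DAG coincides with the canonical path in $G|_{[a,b]}$ under the same endpoint constraints---an optimal-substructure identity following from the max-type recurrence and the locality of the tie-break; and (ii) $\omega(G|_{[a,b]}, \tau) \le \omega$, since restricting to an interval of times can only shrink out-neighborhoods and therefore frontiers. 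With bounded in-degree, each recursion level does work linear in the size of the active sub-DAG, so the total runtime summed over $O(\log T)$ levels is $O(T \log T)$, comfortably inside $\poly(T)$.

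The step I expect to be the main obstacle is the correctness of the witness-tag propagation in the presence of ties. The safeguard is to commit once and for all to the deterministic tie-break rule used to pick $u^\ast$ and then \emph{define} $\pi(w) := \pi(u^\ast)$, so that the tag literally rides along the canonical chain of predecessor choices. A formal induction along $\tau$ then shows that $\pi(w)$ equals the unique element of $F_m$ traversed by the canonical $v_a \!\to\! w$ path, and in particular at $w = v_b$ this delivers the desired midpoint vertex. Once this invariant is nailed down, the space, time, and correctness claims in the theorem fall out of the standard Hirschberg-style analysis applied to the topological bisection tree.
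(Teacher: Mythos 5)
Your proposal is correct in its overall architecture and shares the paper's skeleton --- bisect the topological order, identify the vertex of the width-$\le\omega$ middle frontier through which the canonical path crosses, recurse on the two halves with $O(\log T)$ stack depth --- but it differs genuinely in how the crossing vertex is found. The paper uses a meet-in-the-middle valuation: it computes prefix values $f(v)$ by one forward pass on $[i,m]$ and suffix values $g(v)$ by a \emph{separate} forward pass per candidate $v \in F^{\mathrm{mid}}_I$, then selects $v^\star = \argmax_v f(v) \otimes g(v)$ via Proposition~\ref{prop:decomposition}; this costs up to $\omega$ passes per node and yields total time $T \cdot \poly(\omega, \log T)$. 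You instead make a single forward pass over the whole interval and piggyback an $O(\log T)$-bit crossing tag $\pi(w)$ on each frontier entry, propagated along the deterministic witness-predecessor chain; the augmented frontier is still $O(\omega\log T)$ cells, and the midpoint is read off as $\pi(v_b)$. Your route is arguably cleaner and asymptotically faster (roughly $O(T\log T)$ time versus the paper's extra $\poly(\omega)$ factor), and it tracks the global predecessor map more directly, whereas the paper's $f \otimes g$ decomposition needs its own tie-breaking layer at the midpoint and an appeal to distributivity and monotonicity of $\otimes$. Both routes rely on the same unproved-but-true consistency fact (the paper's Lemma~\ref{lem:local-consistency}) that re-running the local DP with correct boundary values reproduces the global predecessor choices. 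Two small bookkeeping gaps you should close: the selected crossing vertex $v^\ast$ generally satisfies $\tau(v^\ast) \le m$ but need not equal $m$, so in the left recursive call the target can land at or before the new midpoint, where no crossing exists --- the paper handles this with an explicit ``if the target lies left of the midpoint, recurse only on the left child'' branch, and your recursion needs the analogous case; similarly the right call should treat $v^\ast$ as a boundary vertex for $[m+1,b]$ rather than as a vertex ``at time $m$''. Neither affects the space bound.
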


Thus, up to polylogarithmic factors, traceback can be done in essentially the \emph{same} space as value computation for any DP DAG whose frontier width is bounded under some topological order. No geometric symmetry, band structure, or small-width backward DP is required; the only structural parameter that enters the space bound is the frontier width $\omega$.

\paragraph{Techniques and scope.}
Our algorithm organizes the fixed topological order into a balanced binary recursion tree over intervals of ``time''. Each node induces a subproblem whose internal vertices lie in a time window $I$, with small interfaces to the rest of the DAG. At each level, we:

\begin{itemize}
  \item identify a \emph{middle frontier} $F^{\mathrm{mid}}_I$ across which every optimal path relevant to the subproblem must pass;
  \item compute prefix values from the left boundary to each $v \in F^{\mathrm{mid}}_I$ by a single forward DP on $I$;
  \item compute suffix values from each $v \in F^{\mathrm{mid}}_I$ to the sink $t$ via forward-only recomputation on subintervals (no backward DP on the reversed graph);
  \item select a canonical midpoint vertex $v^\star$ using the max-type semiring and deterministic tie-breaking; and
  \item recurse on the left and right subintervals, carrying only boundary values and a small amount of recursion metadata.
\end{itemize}

The working frontier buffer always stores at most $\omega$ DP values at any time, and the recursion tree has depth $O(\log T)$; recursion metadata fits in $O(\polylog T)$ space. We work throughout with finite DAGs of bounded in-degree, max-type semirings with deterministic witnesses, and deterministic algorithms in a random-access model with running time polynomial in $T$.

\paragraph{Applications and limitations.}
Instantiating the framework yields:

\begin{itemize}
  \item for asymmetric sequence alignment on an $m \times n$ grid with $m \le n$, traceback in space $O(m \log(mn) + \polylog(mn))$;
  \item for one-dimensional recurrences with constant in-degree, traceback in space $O(\polylog T)$;
  \item for banded alignment with bandwidth $B$, traceback in space $O(B \log(BN) + \polylog(BN))$; and
  \item for DPs on DAGs of pathwidth $\mathrm{pw}(G)$, traceback in space $O\bigl(\mathrm{pw}(G) \log |V| + \polylog |V|\bigr)$.
\end{itemize}

We also show that an $\Omega(\omega)$ term (in bits) is unavoidable in any forward, single-pass model where the input is streamed once in topological order, and we discuss conjectured $\sqrt{T}$-type barriers in genuinely streaming settings, in contrast to the offline random-access model we analyze.

The rest of the paper develops the formal model and assumptions (Section~\ref{sec:model}), recasts Hirschberg's algorithm in structural terms (Section~\ref{sec:warmup}), presents the height-compressed recursion and universal traceback algorithm (Sections~\ref{sec:height-compression}--\ref{sec:universal}), derives applications (Section~\ref{sec:applications}), discusses lower bounds and model variants (Section~\ref{sec:lower-bounds}), and concludes with related work and open problems (Sections~\ref{sec:related}--\ref{sec:open}).

\section{Model and Assumptions}
\label{sec:model}

We now formalize the setting of dynamic programming as a graph-theoretic problem on a structured directed acyclic graph. We define the parameters governing the space complexity of evaluation and the specific class of optimization problems---max-type recurrences with unique witnesses---to which our results apply.

\subsection{Dynamic programming as a local DAG}

We abstract the dependency structure of a dynamic program into a directed acyclic graph where vertices represent states and edges represent dependencies in the recurrence.

\begin{definition}[Computational DP DAG]
A \emph{computational DP DAG} is a tuple
\[
  G \;=\; (V, E, S_{\mathrm{src}}, T_{\mathrm{sink}}, \tau)
\]
where:
\begin{itemize}
  \item $V$ is a finite set of $T = |V|$ vertices, which we refer to as \emph{states},
  \item $E \subseteq V \times V$ is a set of directed edges representing computational dependencies,
  \item $S_{\mathrm{src}} \subseteq V$ is a non-empty set of \emph{source vertices} (states with in-degree $0$),
  \item $T_{\mathrm{sink}} \subseteq V$ is a non-empty set of \emph{sink vertices},
  \item $\tau : V \to \{1, \dots, T\}$ is a bijection defining a \emph{topological order}: for every edge $(u, v) \in E$, we have $\tau(u) < \tau(v)$.
\end{itemize}
\end{definition}

For notational convenience, we identify the vertices with their topological indices; henceforth, we assume $V = \{1, \dots, T\}$ and $\tau(v) = v$, so an edge $(u, v)$ implies $u < v$. All references to cuts, frontiers, and frontier width are with respect to this fixed topological order. We are free to choose $\tau$ when formulating the DP; in particular, in applications where the underlying graph has bounded pathwidth, we will later choose $\tau$ derived from a path decomposition.

Standard dynamic programming formulations typically involve recurrences with a constant number of terms. We formalize this via a bounded in-degree assumption.

\begin{assumption}[Locality]
\label{ass:locality}
There exists a universal constant $\degmax \in \mathbb{N}$ such that for all $v \in V$, the in-degree is bounded:
\[
  \deg^{-}(v) \;=\; |\{u : (u, v) \in E\}| \;\le\; \degmax.
\]
\end{assumption}

This assumption implies that the evaluation of a single state, given the values of its predecessors, requires $O(1)$ primitive semiring operations.

\subsection{Time order and frontier width}

The space complexity of evaluating a DP is governed not by the total size of the state space, but by the maximum number of values that must be persisted across any cut in the topological order. We formalize this memory footprint as the \emph{frontier width}. The following definition is specific to the fixed topological order $\tau$; changing $\tau$ may change the frontier width.

\begin{definition}[Frontier and frontier width]
For any time step $\ell \in \{0, \dots, T\}$, the \emph{frontier at time $\ell$}, denoted $\Front(\ell)$, is the set of vertices processed at or before time $\ell$ that have outgoing edges to vertices processed strictly after $\ell$:
\[
  \Front(\ell) \;:=\; \bigl\{ v \in V : v \le \ell \text{ and } \exists w > \ell \text{ such that } (v, w) \in E \bigr\}.
\]
The \emph{frontier width} of $G$ with respect to $\tau$, denoted $\omega(G,\tau)$, is the maximum frontier size:
\[
  \omega(G,\tau) \;:=\; \max_{0 \le \ell \le T} |\Front(\ell)|,
\]
and we write $\omega$ when $G$ and $\tau$ are clear from context.
\end{definition}

The frontier $\Front(\ell)$ captures precisely the set of DP state values that a forward-pass evaluation algorithm must retain in memory after step $\ell$ in order to continue computing all future states $w > \ell$. In geometric settings such as the standard Needleman--Wunsch algorithm on an $m \times n$ grid (with $m \le n$), a natural topological order proceeds column-by-column or row-by-row; the frontier then corresponds to a single column or row, and $\omega(G,\tau) = \Theta(\min\{m,n\})$ for that choice of $\tau$.

\subsection{Semiring and witness structure}

We restrict our attention to optimization problems over ordered structures where one seeks not just the optimal value, but a canonical certificate (witness) of that optimality.

\begin{assumption}[Max-type semiring with deterministic witnesses]
\label{ass:max}
We consider computations over a structure $(\mathcal{S}, \oplus, \otimes, \preceq)$ satisfying:
\begin{enumerate}
  \item $(\mathcal{S}, \preceq)$ is a totally ordered set.
  \item The ``addition'' operator $\oplus$ is the supremum with respect to $\preceq$ over finite subsets of $\mathcal{S}$, and in particular for any $a,b \in \mathcal{S}$ we have $a \oplus b = \max\nolimits_{\preceq}\{a, b\}$. There is a distinguished bottom element $\mathbf{0} \in \mathcal{S}$ such that $\mathbf{0} \oplus a = a$ for all $a$.
  \item The ``multiplication'' operator $\otimes$ is associative and distributes over $\oplus$.
  \item $\otimes$ is monotone with respect to $\preceq$: if $a \preceq b$, then $a \otimes c \preceq b \otimes c$ and $c \otimes a \preceq c \otimes b$ for all $c \in \mathcal{S}$.
\end{enumerate}

For each vertex $v \in V$, we associate a value $x_v \in \mathcal{S}$ defined by the recurrence
\[
  x_v \;=\;
  \begin{cases}
      a_v & \text{if } v \in S_{\mathrm{src}}, \\[3pt]
      \displaystyle \bigoplus\limits_{(u, v) \in E} \bigl(x_u \otimes w_{u,v}\bigr) & \text{otherwise},
  \end{cases}
\]
where $a_v \in \mathcal{S}$ is an initialization value for sources and $w_{u,v} \in \mathcal{S}$ is the weight of edge $(u, v)$.

We assume that every semiring element (weight or DP value) admits a representation using $O(\log T)$ tape cells over a fixed finite alphabet. We refer to this unit of storage ($O(\log T)$ bits) as a \emph{machine word}. We assume that the primitive operations (comparison, $\oplus$, $\otimes$) as well as index arithmetic on $\{1, \dots, T\}$ can be performed in time and space polynomial in $\log T$.

Crucially, we assume a fixed, deterministic tie-breaking rule on edges. Fix once and for all a total order $\prec$ on the edge set $E$. For any vertex $v \notin S_{\mathrm{src}}$, let $P(v) = \{u : (u, v) \in E\}$ be the set of predecessors. We define the unique \emph{witness predecessor} $\pred(v)$ as follows:
\[
  \pred(v) \;=\; u^\star \quad\text{where } u^\star \in P(v),\; x_{u^\star} \otimes w_{u^\star,v} = x_v,
\]
and $(u^\star,v)$ is $\prec$-minimal among all edges $(u,v)$ with $u \in P(v)$ satisfying $x_u \otimes w_{u,v} = x_v$. For $v \in S_{\mathrm{src}}$, we set $\pred(v) = \bot$.
\end{assumption}

The tie-breaking rule ensures that, once the DP values $\{x_v\}_{v \in V}$ are fixed, the predecessor map $\pred : V \to V \cup \{\bot\}$ is uniquely determined.

\begin{lemma}[Consistency of local re-evaluation]
\label{lem:local-consistency}
Let $U \subseteq V$ be any subset of vertices, and let $B \subseteq V \setminus U$ be a boundary set such that:
\begin{itemize}
  \item for every edge $(u,v) \in E$ with $v \in U$ and $u \notin U$, we have $u \in B$, and
  \item the subgraph induced by $U \cup B$ is acyclic and respects the restriction of the topological order $\tau$.
\end{itemize}
Suppose we are given the true global DP values $\{x_u\}_{u \in B}$ on $B$. If we recompute the DP on the induced subgraph on $U \cup B$ using these boundary values and the same recurrence and tie-breaking rule, then for every $v \in U$ the recomputed value $x_v'$ and predecessor $\pred'(v)$ coincide with the global value $x_v$ and predecessor $\pred(v)$.
\end{lemma}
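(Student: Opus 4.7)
The plan is to prove the lemma by strong induction on the topological order $\tau$ restricted to the vertices of $U$. The key structural observation driving the whole argument is that the two hypotheses on $B$ together guarantee that for every $v \in U$, the \emph{entire} in-neighborhood of $v$ in $G$ is contained in $U \cup B$: any predecessor not in $U$ is forced into $B$ by the first bullet, so $P(v) \subseteq U \cup B$. In particular, the induced subgraph on $U \cup B$ preserves, for each $v \in U$, the exact multiset of incoming edges, their weights $w_{u,v}$, and their positions in the global edge order $\prec$. Thus the ``instance'' of the recurrence that determines $x_v$ and $\pred(v)$ in $G$ is literally reproduced by the local recomputation, provided the predecessor values match.

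First I would fix the enumeration $v_1 < v_2 < \cdots < v_k$ of $U$ in the topological order (which is well-defined since the induced subgraph on $U \cup B$ respects $\tau$), and set up the inductive hypothesis: for all $i' < i$, the recomputed value $x_{v_{i'}}'$ equals $x_{v_{i'}}$ and $\pred'(v_{i'}) = \pred(v_{i'})$. For the base case $i=1$, every predecessor of $v_1$ either lies in $B$ (with $x_u' = x_u$ by hypothesis) or would have to lie in $U$ with smaller $\tau$-index, which is impossible; if instead $v_1 \in S_{\mathrm{src}}$, then the local computation applies the source initialization and outputs $x_{v_1}' = a_{v_1} = x_{v_1}$ and $\pred'(v_1) = \bot = \pred(v_1)$.

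The inductive step for $v = v_i$ with $i \ge 2$ is then mechanical. Split $P(v) = (P(v) \cap B) \cup (P(v) \cap U)$. For $u \in P(v) \cap B$ the recomputed value equals $x_u$ by the given boundary data; for $u \in P(v) \cap U$ we have $u < v$ in $\tau$, hence $u = v_{i'}$ for some $i' < i$, and the inductive hypothesis yields $x_u' = x_u$. By Assumption~\ref{ass:max}, applying $\oplus$ to the multiset $\{x_u \otimes w_{u,v} : (u,v) \in E\}$ yields the same element $x_v$ in both computations. For the predecessor, the winning edge is chosen as the $\prec$-minimal edge $(u^\star, v)$ attaining this supremum; since the candidate set of edges and their evaluated products are identical in the two computations and the total order $\prec$ is fixed globally, we get $\pred'(v) = \pred(v)$.

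I do not expect a substantive obstacle here; the only point requiring care is ensuring that every case in the definition of the witness predecessor is handled consistently. The two potential pitfalls are (i) a vertex $v \in U$ whose predecessor set happens to be entirely inside $B$ but which is not a source in $G$, where one must still apply the non-source recurrence rather than the initialization, and (ii) the case $v \in U \cap S_{\mathrm{src}}$, where the local subgraph sees $v$ as a source because $P(v) = \emptyset$ and the initialization value $a_v$ is part of the problem data and thus used identically. Both are resolved by the observation that ``being a source'' is a property of $G$ (equivalently, of $P(v) = \emptyset$), and since $P(v) \subseteq U \cup B$ is preserved under restriction, the local computation classifies $v$ identically and executes exactly the same branch of the recurrence.
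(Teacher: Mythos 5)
Your proof is correct and follows essentially the same route as the paper's: induction on the topological order restricted to $U$, with predecessors of each $v \in U$ split into those in $B$ (correct by hypothesis) and those earlier in $U$ (correct by induction), so that the multiset of candidate products and the $\prec$-minimal winning edge coincide with the global computation. Your extra care about the source/non-source case distinction is a welcome refinement but does not change the argument.
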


\begin{proof}
Order the vertices of $U \cup B$ according to the restriction of $\tau$. We prove the claim by induction on this order, restricted to $U$. 

\emph{Base case:} Let $v$ be the first vertex in $U$ according to $\tau$. If $v \in S_{\mathrm{src}}$, its value depends only on initialization $a_v$, so $x_v' = x_v$ trivially. If $v \notin S_{\mathrm{src}}$, all its predecessors $u$ lie in $B$ (since any predecessor in $U$ would precede $v$ in $\tau$, contradicting that $v$ is first). The values $x_u$ for $u \in B$ are equal to the global values by assumption. Thus, $x_v'$ is computed from the same multiset $\{x_u \otimes w_{u,v}\}$ as the global value, and tie-breaking uses the same $\prec$, so $x_v' = x_v$ and $\pred'(v) = \pred(v)$.

\emph{Inductive step:} For a general $v \in U$, predecessors lie either in $B$ or in $U$ (earlier in the order). By the induction hypothesis, predecessors in $U$ have correct values $x_u' = x_u$. Predecessors in $B$ have correct values by assumption. The calculation for $v$ is therefore identical to the global calculation.
\end{proof}

This determinism and consistency are essential: they guarantee that for any optimal sink value $x_t$, there exists a unique canonical trajectory through the DAG obtained by iteratively following $\pred(\cdot)$, and that re-evaluating the recurrence in any local window with correct boundary values consistently yields the same predecessor choices. This enables space-efficient reconstruction without storing the global pointer graph.

\begin{definition}[Traceback problem]
Given a computational DP DAG $G$ and a target sink $t \in T_{\mathrm{sink}}$, the \emph{traceback problem} asks to construct the unique sequence of vertices $P = (v_0, v_1, \dots, v_k)$ such that:
\begin{enumerate}
  \item $v_k = t$,
  \item $v_0 \in S_{\mathrm{src}}$, and
  \item for all $1 \le i \le k$, $v_{i-1} = \pred(v_i)$.
\end{enumerate}
We refer to $P$ as the canonical optimal witness path from a source to $t$.
\end{definition}

\subsection{Machine model and space measure}

We analyze space complexity in the standard deterministic multitape Turing machine model, which is polynomially equivalent (for space) to the Random Access Machine (RAM) model.

\begin{itemize}
  \item \textbf{Input:} The graph $G$ (adjacency lists, edge weights) is provided on a read-only input tape. The input is \emph{random-access} in the sense that the machine can move its input head to any vertex or edge description in time polynomial in $\log T$.
  \item \textbf{Output:} The witness path $P$ is written to a write-only output tape. This tape does not contribute to the space bound.
  \item \textbf{Work tape:} The machine has one or more read/write work tapes. Space is measured as the number of cells used on these work tapes over a fixed finite alphabet~$\Gamma$.
  \item \textbf{Computational primitives:} We assume that operations in the semiring (comparison, $\oplus$, $\otimes$) and index arithmetic on $\{1, \dots, T\}$ can be performed in time and space polynomial in $\log T$.
\end{itemize}

Throughout the paper, all space bounds refer to cells on the work tapes; the read-only input tape and the write-only output tape do not contribute to the space measure. When we informally say that an algorithm uses $O(\omega)$ space (in words), we mean that it stores the values of at most $O(\omega)$ DP states at any time, which corresponds to $O(\omega \log T)$ work-tape cells under our encoding assumptions. We restrict attention to algorithms that halt in time $\poly(T)$. Our primary objective is to minimize the work-tape space required to produce the output sequence $P$, as a function of the structural parameters $T$ and $\omega$.

\section{Warm-Up: Hirschberg as a Special Case}
\label{sec:warmup}

Before addressing the general graph-theoretic setting, we revisit Hirschberg's classic algorithm for the Longest Common Subsequence (LCS) problem. We view it not as a special-purpose geometric trick, but as an instance of a more general pattern: a recursive decomposition of a topological order together with meet-in-the-middle reasoning about optimal paths. This perspective will directly motivate our generalization to arbitrary DAGs with bounded frontier width.

\subsection{Classical grid DP}

Let $A = a_1 \dots a_m$ and $B = b_1 \dots b_n$ be sequences over an alphabet $\Sigma$. The standard dynamic programming formulation for LCS (or, with modified weights, for edit distance or global alignment) constructs a grid graph $G_{m,n}$ with vertex set
\[
  V \;=\; \{ (i,j) : 0 \le i \le m,\; 0 \le j \le n \}.
\]
Directed edges connect $(i,j)$ from $(i-1,j)$, $(i,j-1)$, and $(i-1,j-1)$, with edge weights determined by the scoring scheme (match/mismatch, insertion, deletion). The DP value $x_{i,j}$ at $(i,j)$ is defined by a max-type recurrence, e.g.\ for LCS:
\[
  x_{i,j} \;=\;
  \begin{cases}
    0 & \text{if } i=0 \text{or } j=0,\\[2pt]
    \max\{ x_{i-1,j},\, x_{i,j-1},\, x_{i-1,j-1} + \mathbf{1}[a_i = b_j]\} & \text{otherwise},
  \end{cases}
\]
with deterministic tie-breaking to select a unique predecessor in case of equality.

In the language of Section~\ref{sec:model}:
\begin{itemize}
  \item the graph size is $T = |V| = (m+1)(n+1)$;
  \item a natural topological order $\tau$ is lexicographical (row-major or column-major), e.g.\ $\tau(i,j) = i(n+1) + j$;
  \item for either of these topological orders, any cut in the order intersects $\Theta(\min\{m,n\})$ DP states, so the frontier width is $\omega(G_{m,n},\tau) = \Theta(\min\{m,n\})$;
  \item the recurrence is over a max-type semiring (e.g.\ $(\mathbb{Z}, \max, +)$) with constant in-degree $\degmax = 3$, and deterministic tie-breaking yields a unique predecessor at each vertex.
\end{itemize}

A standard forward implementation maintains a rolling buffer of one row (or one column), evaluating all $T$ states in $O(T)$ time and $O(\omega)$ space (in words) and computing the optimal value $x_{m,n}$. However, reconstructing an optimal alignment naively requires storing a back-pointer at each cell, resulting in $\Theta(mn)$ words for traceback.

\subsection{Hirschberg's algorithm: a structural view}

Hirschberg's algorithm circumvents the quadratic traceback space requirement by exploiting the product structure of the grid and the reversibility of the recurrence. We rephrase its logic as a recursive identification of optimal-path crossings through a balanced separator in the topological order. This makes the structural pattern more visible and prepares the ground for generalization.

For simplicity, assume $m \le n$ and use a column-major order so that the frontier width is $\omega = \Theta(m)$.

The algorithm proceeds as follows:
\begin{enumerate}
  \item \textbf{Midpoint selection.}
  Fix a column index $k = \lfloor n/2 \rfloor$. This defines a vertical cut through the grid, separating the problem into a \emph{left} half (columns $0,\dots,k$) and a \emph{right} half (columns $k,\dots,n$). In terms of the topological order, this is a balanced cut of the time axis.

  \item \textbf{Forward valuation.}
  Compute, using space $O(m)$ words, the optimal value
  \[
    f(i) \;=\; \text{optimal score of a path from $(0,0)$ to $(i,k)$}
  \]
  for all $0 \le i \le m$, via a standard forward DP restricted to columns $0,\dots,k$.

  \item \textbf{Backward valuation.}
  Compute, again using $O(m)$ words, the optimal value
  \[
    g(i) \;=\; \text{optimal score of a path from $(i,k)$ to $(m,n)$}
  \]
  for all $0 \le i \le m$. This step relies on the symmetry of the grid: the backward problem is isomorphic to a forward DP on the reversed strings $A^\mathrm{rev}$, $B^\mathrm{rev}$ over a grid of the same width $m$, with the same frontier width $\Theta(m)$ under the reversed topological order.

  \item \textbf{Crossing identification.}
  By the principle of optimality, any optimal path from $(0,0)$ to $(m,n)$ must cross the column $k$. Thus there exists some row $i^\star$ such that the vertex $(i^\star,k)$ lies on an optimal path. Moreover,
  \[
    i^\star \;=\; \argmax_{0 \le i \le m} \bigl( f(i) + g(i) \bigr),
  \]
  and $(i^\star,k)$ is a canonical midpoint vertex on an optimal witness path when the same deterministic tie-breaking rule is used in both the forward and backward DPs.

  \item \textbf{Recursion.}
  The problem decomposes into two independent subproblems:
  \begin{itemize}
    \item an LCS (or alignment) problem from $(0,0)$ to $(i^\star,k)$ on the left subgrid, and
    \item an LCS problem from $(i^\star,k)$ to $(m,n)$ on the right subgrid.
  \end{itemize}
  The algorithm recurses on these two subproblems, accumulating the witness path segments.
\end{enumerate}

\paragraph{Space complexity.}
Let $S(m,n)$ denote the work-tape space required (in machine words) to reconstruct an alignment via Hirschberg, in the machine model of Section~\ref{sec:model} where the input strings $A$ and $B$ reside on a read-only input tape. At each recursive call, the algorithm maintains two line buffers of length $\Theta(m)$ for the forward and backward passes, plus a constant amount of metadata (including the current column index and recursion state) on the recursion stack. The recursion depth is $O(\log n)$, and the additional per-level metadata is $O(1)$ words. Thus we obtain the recurrence
\[
  S(m,n) \;\le\; \max\{ S(m, \lfloor n/2 \rfloor),\, O(m) \} \;+\; O(1),
\]
which solves to
\[
  S(m,n) \;=\; O(m + \log n).
\]
In particular, the traceback workspace is linear in the shorter dimension $m$, plus a logarithmic term for the recursion depth, representing a substantial improvement over the naive $\Theta(mn)$ pointer-storage scheme in regimes where $m \ll n$. Translating to tape cells using our encoding assumptions, this corresponds to $O\bigl((m + \log n)\log(mn)\bigr)$ cells.

\subsection{Key structural ideas}

Abstracting away the grid geometry reveals two structural components driving Hirschberg's efficiency:

\begin{itemize}
  \item \textbf{Topological bisection.}
  The column $k$ acts as a balanced separator in the topological order induced by the grid. Splitting the problem at $k$ ensures the recursion depth is $O(\log n)$, i.e., logarithmic in one dimension of the topological order.

  \item \textbf{Meet-in-the-middle evaluation.}
  The set of candidate ``midpoint'' vertices consists of those on the cut column $k$ that lie on some path from source to sink. By combining forward values $f(i)$ and backward values $g(i)$ at these midpoints, the algorithm locates a vertex which \emph{must} lie on an optimal path. When the same deterministic tie-breaking is used in all DP evaluations, the argmax of $f(i) + g(i)$ picks out a canonical midpoint along the canonical optimal path.
\end{itemize}

Our generalization addresses the following question:
\begin{quote}
  \emph{Can we replicate this divide-and-conquer strategy on an arbitrary DP DAG where no geometric symmetry exists and where ``backward'' evaluation may be computationally expensive or wide?}
\end{quote}

We answer in the affirmative. We will demonstrate that any DAG with small frontier width admits a recursive decomposition into balanced intervals of the topological order, and that one can perform meet-in-the-middle traceback using \emph{only forward} DP evaluations. The key idea is to recompute suffix costs via local forward DPs, trading time for space, and to organize the recursion via a height-compressed decomposition of the time line.

\section{Height-Compressed Recursion for DP DAGs}
\label{sec:height-compression}

We now construct the structural framework for our algorithm: a hierarchical decomposition of the time-ordered DP DAG together with a notion of local interfaces. This construction mirrors the height-compressed computation trees employed in time--space tradeoff results for Turing machines, but here we exploit random access to the input and are free to recompute local DPs as often as needed while keeping the working frontier bounded by the frontier width.

\subsection{Recursive decomposition of the time order}

Recall that $\tau : V \to \{1,\dots,T\}$ defines a fixed topological ordering of the vertices. For the remainder of the exposition, we identify each vertex with its topological index, setting $V = \{1,\dots,T\}$ and $\tau(v) = v$. Edges $(u,v) \in E$ thus satisfy $u < v$.

We partition the computation not by geometry, but by contiguous intervals of topological time.

\begin{definition}[Interval and midpoint split]
Let $I = [i,j]$ be an interval of indices with $1 \le i \le j \le T$. We define the \emph{midpoint} of $I$ as
\[
  m(I) \;:=\; \bigl\lfloor (i+j)/2 \bigr\rfloor.
\]
The interval $I$ is partitioned into two disjoint sub-intervals: the \emph{left child} $I_L = [i, m(I)]$ and the \emph{right child} $I_R = [m(I)+1, j]$.
\end{definition}

This bisection scheme induces a canonical binary tree over the time axis.

\begin{definition}[Recursion tree]
The \emph{recursion tree} $\mathcal{R}$ for the time domain $[1,T]$ is the rooted binary tree defined inductively:
\begin{itemize}
  \item the root of $\mathcal{R}$ is the interval $[1,T]$;
  \item every internal node $I = [i,j]$ (with $i<j$) has exactly two children: $I_L$ and $I_R$;
  \item the leaves of $\mathcal{R}$ are singleton intervals $[k,k]$ with $k \in \{1,\dots,T\}$.
\end{itemize}
In our algorithm we will stop the recursion earlier, at a base-case size $B_{\mathrm{base}} \ge 1$, so leaves will correspond to intervals of length at most $B_{\mathrm{base}}$.
\end{definition}

\begin{lemma}[Logarithmic depth]
\label{lem:depth}
The depth of the recursion tree $\mathcal{R}$ is $O(\log T)$. Moreover, for any node $I$ at depth $d$, the length of the interval satisfies $|I| \le \lceil T / 2^d \rceil$.
\end{lemma}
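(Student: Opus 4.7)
The plan is to prove the interval-length bound $|I| \le \lceil T/2^d \rceil$ by induction on the depth $d$ of the node, and then derive the overall depth bound as an immediate corollary by observing that the recursion must terminate once intervals shrink to the base size.

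First I would record a short computation about how the bisection affects interval length. For an interval $I = [i,j]$ of length $N = j-i+1$, the midpoint is $m(I) = \lfloor (i+j)/2 \rfloor$, so the left and right children have lengths
\[
  |I_L| \;=\; m(I) - i + 1 \;=\; \bigl\lceil N/2 \bigr\rceil,
  \qquad
  |I_R| \;=\; j - m(I) \;=\; \bigl\lfloor N/2 \bigr\rfloor.
\]
In particular both children have length at most $\lceil N/2 \rceil$. This is the only algebraic content of the proof.

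Next I would do the induction. The base case $d=0$ is the root $[1,T]$, for which $|I| = T = \lceil T/2^0 \rceil$. For the inductive step, if $I$ at depth $d$ satisfies $|I| \le \lceil T/2^d \rceil$, then each child $I'$ has length $|I'| \le \lceil |I|/2 \rceil \le \lceil \lceil T/2^d \rceil / 2 \rceil = \lceil T/2^{d+1} \rceil$, where the last equality uses the standard identity $\lceil \lceil x \rceil / 2 \rceil = \lceil x/2 \rceil$ for real $x > 0$ (which follows from the fact that $2$ is an integer). This gives the claimed per-depth size bound.

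Finally, since the recursion stops at intervals of length at most $B_{\mathrm{base}} \ge 1$, the tree can have internal nodes only at depths $d$ for which $\lceil T/2^d \rceil > B_{\mathrm{base}}$, hence at most depth $d = \lceil \log_2(T/B_{\mathrm{base}}) \rceil + O(1) = O(\log T)$. There is no real obstacle here; the only thing to be careful about is handling the ceilings correctly, which is why I would isolate the $|I_L|, |I_R|$ computation and the ceiling identity before entering the induction.
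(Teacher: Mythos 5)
Your proof is correct and follows essentially the same route as the paper's: induction on depth using the fact that each bisection at most halves (up to a ceiling) the interval length, followed by the observation that the recursion bottoms out after $O(\log T)$ halvings. If anything, your version is slightly sharper: the explicit computation $|I_L| = \lceil N/2 \rceil$, $|I_R| = \lfloor N/2 \rfloor$ combined with the identity $\lceil \lceil x \rceil / 2 \rceil = \lceil x/2 \rceil$ yields exactly the stated bound $|I| \le \lceil T/2^d \rceil$, whereas the paper's induction only establishes the marginally weaker $L(d) \le T/2^d + 1$.
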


\begin{proof}
Let $L(d)$ denote the maximum length of an interval at depth $d$. By construction, $L(0) = T$, and for any depth $d$, each child interval has length at most $\lceil L(d)/2 \rceil$. Thus $L(d+1) \le \lceil L(d)/2 \rceil$, which implies by induction that
\[
  L(d) \;\le\; T/2^d + 1.
\]
In particular, $L(d) \le 1$ once $d \ge \log_2 T + 1$, so the depth is $O(\log T)$.
\end{proof}

This decomposition provides a ``coordinate system'' for our recursive traceback: each node $I$ of $\mathcal{R}$ corresponds to a subproblem in which we care about paths whose internal vertices lie in the time window $I$.

\subsection{Interval interfaces and local DPs}

To evaluate the DP on a sub-interval $I$, we must formally define the input and output boundaries of the induced subgraph.

\begin{definition}[Restricted DAG and interfaces]
Let $I = [i,j]$ be an interval in $\mathcal{R}$.
\begin{itemize}
  \item The \emph{restricted vertex set} is $V_I = \{ v \in V : v \in I \}$.
  \item The \emph{restricted edge set} is $E_I = \{ (u,v) \in E : u,v \in V_I \}$.
  \item The \emph{left interface} (or input boundary) is the set of vertices within $I$ that depend on computations prior to $i$:
  \[
    F_I^{\mathrm{in}} \;:=\; \{ v \in V_I : \exists u < i \text{ with } (u,v) \in E \}.
  \]
  \item The \emph{right interface} (or output boundary) is the set of vertices within $I$ that are needed by computations after $j$:
  \[
    F_I^{\mathrm{out}} \;:=\; \{ v \in V_I : \exists w > j \text{ with } (v,w) \in E \}.
  \]
\end{itemize}
\end{definition}

It is important to distinguish the input interface (receivers) from the actual boundary (senders). The space required to restart the DP in $I$ is determined by the set of vertices \emph{outside} $I$ that have edges into $I$. We define the \emph{left boundary set} as:
\[
  B_I \;:=\; \bigl\{ u < i : \exists v \in V_I \text{ with } (u,v) \in E \bigr\}.
\]
By construction, $B_I \subseteq \Front(i-1)$ and $F_I^{\mathrm{out}} \subseteq \Front(j)$, where the frontiers are defined with respect to the fixed topological order $\tau$. In particular,
\[
  |B_I| \;\le\; \omega
  \qquad\text{and}\qquad
  |F_I^{\mathrm{out}}| \;\le\; \omega.
\]
Thus, every subproblem induced by an interval requires inputs from a set $B_I$ of size at most $\omega$, and produces outputs relevant for the future on a set of size at most $\omega$.

We next formalize the fact that we can evaluate the DP within $I$ in a space bounded by $\omega$, given boundary values on a small set of vertices before $i$.

\begin{lemma}[Local DP evaluation]
\label{lem:local-dp}
Let $I = [i,j]$ be an interval, and let $U \subseteq V_I$ be any target set. Suppose we are given the true global DP values $\{x_u\}_{u \in B_I}$ on the boundary set $B_I$. Then the values $\{x_v\}_{v \in U}$ determined by the global recurrence (Assumption~\ref{ass:max}) can be recomputed using additional workspace $O(\omega)$ (in words).
\end{lemma}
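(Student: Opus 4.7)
The plan is to simulate the restricted forward DP on $V_I$ in topological order, carrying only a rolling buffer that stores the DP values at currently-live vertices inside $I$. Correctness is inherited directly from Lemma~\ref{lem:local-consistency}: applying it with the target set $V_I$ and boundary $B = B_I$, the hypotheses are satisfied (every edge into $V_I$ from outside has its tail in $B_I$, and the restriction of $\tau$ gives an acyclic order on $V_I \cup B_I$), so the recomputed values and predecessors at every $v \in V_I$, and in particular at every $v \in U$, coincide with the global $x_v$ and $\pred(v)$.

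Concretely, I would scan $v = i, i+1, \dots, j$ in order and maintain an active buffer
\[
  A(\ell) \;=\; \bigl\{\, (v, x_v) : v \le \ell,\; v \in V_I,\; \exists w > \ell \text{ with } (v,w) \in E \bigr\},
\]
i.e., exactly the DP values for $\Front(\ell) \cap V_I$. To process $v$, the algorithm reads its at most $\degmax$ in-neighbors from the random-access input, fetches each $x_u$ either from the given boundary table $\{x_u\}_{u \in B_I}$ or from $A(v-1)$, applies the recurrence of Assumption~\ref{ass:max} with the fixed tie-breaking rule to obtain $x_v$, emits $x_v$ to the output if $v \in U$, and updates the buffer to $A(v)$ by optionally inserting $(v, x_v)$ and evicting any $u$ whose last live out-neighbor has been consumed. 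At every step $\ell$ we have $|A(\ell)| \le |\Front(\ell)| \le \omega$, so the buffer contributes $O(\omega)$ words; the associated per-entry index and eviction metadata add another $O(\omega)$ words; the transient scratch for a single recurrence evaluation (holding up to $\degmax$ predecessor values and one tie-break comparison) is $O(1)$ words. Since the boundary values $\{x_u\}_{u \in B_I}$ are already provided by hypothesis and do not count against the ``additional'' workspace, the total additional space is $O(\omega)$ in words.

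The main obstacle I would flag is the eviction test: detecting when some $u \in A(\ell)$ has lost its last live out-neighbor. Assumption~\ref{ass:locality} bounds only the in-degree, so a naive out-adjacency re-scan at each step would be prohibitive in time (though still cheap in space). My preferred implementation attaches to each live buffer entry $u$ a counter of un-processed out-neighbors, initialized by a one-time scan of $u$'s out-adjacency list when $u$ is first inserted, and decremented each time one of $u$'s successors is processed; the decrement is piggy-backed on the in-neighbor scan the successor already performs, so it costs $O(\degmax)$ per processed vertex and uses $O(1)$ words per live entry. This preserves the $O(\omega)$ additional-space bound and needs no structural assumption beyond those already in force.
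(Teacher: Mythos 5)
Your proposal is correct and follows essentially the same route as the paper's proof: a single forward scan over $[i,j]$ with a rolling active buffer that is always a subset of $\Front(\ell)$ (hence of size at most $\omega$), with correctness delegated to Lemma~\ref{lem:local-consistency}. The only difference is an implementation detail in the eviction test --- you maintain per-entry successor counters, whereas the paper simply rescans each live vertex's adjacency list on the read-only input tape at every step --- and both variants preserve the $O(\omega)$-word space bound.
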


\begin{proof}
We describe a forward-pass algorithm $\mathcal{A}_I$ that recomputes the DP values for vertices in $I$, assuming the values on $B_I$ are available.

Maintain a working buffer $M$ (the \emph{active set}) intended to store the current active values within $I$ together with the fixed boundary values. We will ensure $|M| \le \omega$ at all times.

\begin{enumerate}
  \item Initialize $M$ to contain the pairs $(u,x_u)$ for all $u \in B_I$.
  \item For $k$ increasing from $i$ to $j$:
  \begin{itemize}
    \item Compute $x_k$ from its predecessors using the recurrence
    \[
      x_k \;=\; \bigoplus_{(u,k) \in E} (x_u \otimes w_{u,k}),
    \]
    where each predecessor $u$ is either in $B_I$ (if $u < i$) or has already been processed in $\{i,\dots,k-1\}$. For $u \in B_I$, $x_u$ is part of the boundary, and for $u \in [i,k-1]$ we maintain $x_u$ in $M$ by induction.
    \item Insert $(k,x_k)$ into $M$.
    \item Remove from $M$ any vertex $u$ that will never be used again as a predecessor of any vertex in $I$ with index $>k$. Formally, we remove $u$ if there is no $w \in I$ with $w>k$ and $(u,w) \in E$. To test this, for each $u \in M$ we scan the adjacency list of $u$ on the read-only input tape and check whether there exists such a successor in $I$; if not, we delete $u$ from $M$.
  \end{itemize}
  \item Whenever $k \in U$, record (or output) the value $x_k$.
\end{enumerate}

At any time $k$, the set of vertices $u$ that have a successor in $I$ with index $>k$ is contained in $\Front(k) \cap (B_I \cup V_I)$, whose size is at most $\omega$ by definition of frontier width. By construction $M$ never contains any vertex outside this set, so $|M| \le \omega$ throughout the computation. The algorithm uses only the working buffer $M$ plus a constant amount of scratch space for scanning adjacency lists and indices, so the additional work-tape space beyond the representation of the boundary values is $O(\omega)$.

Finally, by Lemma~\ref{lem:local-consistency}, recomputing the DP in this way using the true boundary values yields the same $x_v$ as in the global DP, for all $v \in U$.
\end{proof}

This lemma guarantees that, modulo $O(\omega)$ workspace, we can compute:
\begin{itemize}
  \item \textbf{Forward costs.} Given values on a left boundary $B_I \subseteq \Front(i-1)$, we can compute the optimal value to reach any $v \in F_I^{\mathrm{out}}$ via paths whose internal vertices lie in $I$.
  \item \textbf{Suffix costs.} Given a vertex $v \in I$ and a sink $s \in I$ with $v \le s$, we can similarly run a forward DP on the induced subgraph of $I$ starting from $v$ to compute the optimal value of a path from $v$ to $s$ whose internal vertices lie in $I$. As with the prefix computation, the active set remains a subset of the global frontier, ensuring the $O(\omega)$ space bound.
\end{itemize}

In the next section, we will use these local evaluations as building blocks in a recursive, meet-in-the-middle traceback algorithm structured by the recursion tree $\mathcal{R}$. The global-to-local consistency of Lemma~\ref{lem:local-consistency} will ensure that whenever we seed a local DP with true boundary values, the recomputed predecessors agree with the unique global predecessor map on the relevant vertices.

\section{Universal Hirschberg Algorithm}
\label{sec:universal}

We now present the core algorithm and its complexity analysis. The construction generalizes Hirschberg's divide-and-conquer strategy from grids to arbitrary DP DAGs, achieving space efficiency by trading re-computation for memory. Structurally, the algorithm is a meet-in-the-middle traceback scheme organized along the recursion tree $\mathcal{R}$ from Section~\ref{sec:height-compression}, operating in the offline random-access model of Section~\ref{sec:model}.

\subsection{Problem restatement and invariant}

Fix a DP DAG $G = (V,E,S_{\mathrm{src}},T_{\mathrm{sink}},\tau)$ satisfying Assumptions~\ref{ass:locality} and~\ref{ass:max}, and let $t \in T_{\mathrm{sink}}$ be a designated sink. Let $x_t$ denote its optimal value under the global recurrence.

The traceback algorithm maintains recursive subproblems of the following form.

\begin{quote}
  \textbf{Traceback subproblem} $(I, B_{\mathrm{left}}, s, \mathrm{val})$:
  \begin{itemize}
    \item $I = [i,j]$ is an interval in the recursion tree $\mathcal{R}$.
    \item $B_{\mathrm{left}}$ is a boundary set of vertices with indices $< i$ or in $S_{\mathrm{src}}$ (possibly empty at internal levels) with known values $\{x_u\}_{u \in B_{\mathrm{left}}}$.
    \item $s \in V_I$ is a target vertex with $i \le s \le j$.
    \item $\mathrm{val} \in \mathcal{S}$ is the optimal value of a path from \emph{some} $u \in B_{\mathrm{left}}$ to $s$ whose internal vertices lie in $I$.
  \end{itemize}
\end{quote}

Intuitively, $(I,B_{\mathrm{left}},s,\mathrm{val})$ represents a subproblem in which we have ``compressed'' everything before time $i$ into boundary vertices and wish to reconstruct the portion of the canonical witness path whose internal vertices lie between $i$ and $j$.

The algorithm maintains the following invariant.

\begin{lemma}[Subproblem invariance]
\label{lem:subproblem-invariant}
In the initial call and at every recursive call of the traceback procedure, the subproblem $(I,B_{\mathrm{left}},s,\mathrm{val})$ satisfies:
\begin{enumerate}
  \item $B_{\mathrm{left}} \subseteq \{1,\dots,i-1\} \cup S_{\mathrm{src}}$ and $s \in I$,
  \item there exists at least one optimal witness path $P$ from $B_{\mathrm{left}}$ to $s$ whose internal vertices all lie in $I$ and whose semiring value is $\mathrm{val}$.
\end{enumerate}
In particular, for the initial call we have $\mathrm{val} = x_t$.
\end{lemma}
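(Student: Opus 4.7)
The plan is to prove this by induction on the depth of the subproblem in the recursion tree $\mathcal{R}$ from Section~\ref{sec:height-compression}. Since the lemma merely asserts an invariant and specifies how the algorithm must set up each recursive call, the proof amounts to (i) verifying the initial call satisfies both conditions and (ii) showing that the recursive descent into the left and right subintervals preserves them.

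For the base case (the initial call), I would instantiate $I = [1, T]$, $B_{\mathrm{left}} = S_{\mathrm{src}}$, $s = t$, and $\mathrm{val} = x_t$. Condition~(1) is immediate since $i = 1$, making $\{1,\dots, i-1\} = \emptyset$, so $B_{\mathrm{left}} \subseteq S_{\mathrm{src}}$, and trivially $t \in I$. Condition~(2) follows from Assumption~\ref{ass:max}: iteratively applying $\pred(\cdot)$ from $t$ produces the canonical witness path $P^\star$, which terminates at a unique source $v_0 \in S_{\mathrm{src}} \subseteq B_{\mathrm{left}}$ (since $\pred$ is undefined only on sources), has internal vertices in $V = I$, and by the semiring recurrence attains total value $x_t = \mathrm{val}$.

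For the inductive step, suppose the invariant holds for $(I, B_{\mathrm{left}}, s, \mathrm{val})$ with witness path $P = (v_0, \dots, v_k = s)$, where $v_0 \in B_{\mathrm{left}}$ and all internal vertices lie in $I$. The algorithm will (as described in the next subsection) identify a canonical midpoint vertex $v^\star$ lying on the middle frontier with $v^\star \le m(I)$, together with scores $\mathrm{val}_L$ of the best prefix into $v^\star$ and $\mathrm{val}_R$ of the best suffix out of $v^\star$. It then spawns $(I_L, B_{\mathrm{left}}, v^\star, \mathrm{val}_L)$ and, if $s > m(I)$, $(I_R, \{v^\star\}, s, \mathrm{val}_R)$. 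For the left call, condition~(1) holds because $I_L$ shares its left endpoint with $I$, so $B_{\mathrm{left}} \subseteq \{1,\dots,i-1\} \cup S_{\mathrm{src}}$ is inherited, and $v^\star \in I_L$ by construction. Splitting $P = P_L \cdot P_R$ at $v^\star$ (which lies on $P$) yields a path $P_L$ from $v_0$ to $v^\star$ whose internal vertices lie in $I_L$ by monotonicity of $\tau$ along $P$, and whose semiring value is $\mathrm{val}_L$ by associativity of $\otimes$ and distributivity with $\oplus$. For the right call, $\{v^\star\} \subseteq \{1,\dots, m(I)\}$ since $v^\star \le m(I)$, matching condition~(1) with respect to $I_R = [m(I)+1, j]$; when $s \le m(I)$, the right subproblem is empty and there is nothing to check. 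The remaining segment $P_R$ certifies condition~(2) for the right call.

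The main obstacle is ensuring that the midpoint vertex $v^\star$ actually lies on \emph{the} canonical witness path (rather than merely some optimal path), so that the splitting $P = P_L \cdot P_R$ is consistent with the unique $\pred$-chain. This relies on two facts to be established in the upcoming algorithm: first, a structural ``middle frontier'' claim asserting that every path from $B_{\mathrm{left}}$ to $s$ with $s > m(I)$ crosses a set of at most $\omega$ vertices at time $m(I)$; and second, that selecting $v^\star$ by $\argmax$ over this frontier of $\mathrm{val}_L(v) \otimes \mathrm{val}_R(v)$, with the same tie-breaking order $\prec$ used globally, picks the canonical midpoint. Lemma~\ref{lem:local-consistency} ensures that prefix and suffix values recomputed locally from the boundary set $B_{\mathrm{left}}$ agree with the global DP values, so the $\argmax$ is computed against the true semiring scores and the induction carries through.
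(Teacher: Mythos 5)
Your overall strategy---induction on recursion depth, with the crossing lemma and the value decomposition supplying the midpoint---matches the paper's proof. But there is a genuine gap in the inductive step: you split the inductively-given witness path $P$ at $v^\star$, writing ``$v^\star$ (which lies on $P$),'' and this is not justified. The vertex $v^\star$ is chosen as the $\argmax$ of $f(v)\otimes g(v)$ over the middle frontier, and when several optimal paths exist it may well miss $P$ entirely. The paper's proof avoids this by \emph{not} splitting $P$: it uses Proposition~\ref{prop:decomposition} to conclude $\mathrm{val}=\bigoplus_{v}(f(v)\otimes g(v))$, notes that $v^\star$ attains this maximum, and therefore that concatenating an optimal prefix achieving $f(v^\star)$ with an optimal suffix achieving $g(v^\star)$ yields a \emph{new} optimal witness path $P^\star$ of value $\mathrm{val}$ through $v^\star$; the prefix and suffix of $P^\star$ then certify the invariant for the two children. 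Your own diagnosis of ``the main obstacle'' points in the wrong direction: you say one must show $v^\star$ lies on \emph{the} canonical $\pred$-chain, but condition~(2) only asserts existence of \emph{some} optimal witness path of value $\mathrm{val}$, so no such canonicity claim is needed (nor is it proved in the paper). Deferring the resolution to an unstated future claim that the $\argmax$ ``picks the canonical midpoint'' leaves the induction incomplete.

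A secondary inaccuracy: you describe the algorithm as always performing the midpoint search and always recursing on $(I_L, B_{\mathrm{left}}, v^\star, \mathrm{val}_L)$, with the right call issued ``if $s > m(I)$.'' In fact, when $s \le m(I)$ no midpoint is computed at all; the algorithm recurses directly on $(I_L, B_{\mathrm{left}}, s, \mathrm{val})$, and the invariant is preserved there because every path from $B_{\mathrm{left}}$ to $s$ with internal vertices in $I$ automatically has its internal vertices in $[i,m]$ by monotonicity of the topological order. This case needs to be stated and checked separately, as the paper does.
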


\begin{proof}
For the initial call, we set $I = [1,T]$, $B_{\mathrm{left}} = S_{\mathrm{src}}$, $s = t$, and $\mathrm{val} = x_t$. By definition of $x_t$, there exists an optimal path from $S_{\mathrm{src}}$ to $t$ with value $x_t$, and all its internal vertices lie in $I = [1,T]$. Thus items (1) and (2) hold. Note that for $I=[1,T]$, the condition $B_{\mathrm{left}} \subseteq \{1,\dots,0\} \cup S_{\mathrm{src}}$ simplifies to $B_{\mathrm{left}} \subseteq S_{\mathrm{src}}$, which holds.

Assume the lemma holds for a parent subproblem $(I,B_{\mathrm{left}},s,\mathrm{val})$ with $I = [i,j]$. Let $m = m(I)$.

\begin{itemize}
  \item If $s \le m$, then any path from $B_{\mathrm{left}}$ to $s$ contained in $I$ must lie entirely in $[i,m] = I_L$, as indices strictly increase. Since an optimal path $P$ exists for the parent, that same $P$ is valid for the child subproblem $(I_L,B_{\mathrm{left}},s,\mathrm{val})$. Thus the invariant holds.

  \item If $s > m$, let $P$ be an optimal witness path guaranteed by the parent invariant. Lemma~\ref{lem:crossing} (proved below) shows that $P$ intersects the middle frontier $F^{\mathrm{mid}}_I$ at some vertex $v^\circ$. Proposition~\ref{prop:decomposition} (proved later) establishes that
  \[
      \mathrm{val} \;=\; \max_{v \in F^{\mathrm{mid}}_I} (f(v) \otimes g(v)).
  \]
  The algorithm selects a canonical maximizer $v^\star \in F^{\mathrm{mid}}_I$ deterministically. Since $v^\star$ achieves the maximum, there exists an optimal path $P^\star$ passing through $v^\star$ with value $\mathrm{val}$.
  
  Split $P^\star$ at $v^\star$ into a prefix $P_{\mathrm{pref}}$ (from $B_{\mathrm{left}}$ to $v^\star$) and a suffix $P_{\mathrm{suff}}$ (from $v^\star$ to $s$). The value of $P_{\mathrm{pref}}$ is $f(v^\star)$ and the value of $P_{\mathrm{suff}}$ is $g(v^\star)$.
  
  Hence, the left subproblem $(I_L, B_{\mathrm{left}}, v^\star, f(v^\star))$ admits $P_{\mathrm{pref}}$ as an optimal witness, and the right subproblem $(I_R, \{v^\star\}, s, g(v^\star))$ admits $P_{\mathrm{suff}}$ as an optimal witness. The invariant is preserved for both children.
\end{itemize}

A formal induction on the recursion depth completes the proof.
\end{proof}

\subsection{High-level algorithmic structure}

At a high level, the Universal Hirschberg algorithm consists of three components:
\begin{enumerate}
  \item a global forward DP to compute $x_t$;
  \item a recursive traceback procedure structured by the recursion tree $\mathcal{R}$;
  \item a base case that performs explicit local DP and backtracking within small intervals.
\end{enumerate}

\paragraph{Step 1: Global valuation.}
We first execute a standard forward DP sweep over $G$ in topological order, using a rolling frontier as in Lemma~\ref{lem:local-dp}, to compute all $x_v$ and in particular $x_t$. This requires $O(\omega)$ space and $O(T)$ time. The values of intermediate states are \emph{not} stored persistently; only $x_t$ (and, implicitly, the instance description on the read-only tape) is retained.

\paragraph{Step 2: Recursive traceback.}
We then invoke a recursive procedure
\[
  \textsc{Traceback}(I, B_{\mathrm{left}}, s, \mathrm{val})
\]
which outputs the segment of the witness path whose internal vertices lie in $I$. The recursion follows the decomposition of $I$ into balanced sub-intervals; at each level, if $s$ lies strictly to the right of the midpoint, we identify a ``midpoint'' vertex $v^\star$ through which an optimal path contributing to $\mathrm{val}$ must pass, and we recurse on the two resulting subproblems.

\paragraph{Step 3: Base case.}
When the interval $I$ becomes sufficiently small (length at most $B_{\mathrm{base}}$, where $B_{\mathrm{base}} = \lceil (\log T)^c \rceil$ for some fixed constant $c$), we solve the subproblem directly by constructing a local DP table over $I$ using $O(\omega + B_{\mathrm{base}})$ space and then backtracking pointers within $I$.

We now formalize the recursive step.

\subsection{Middle frontiers and midpoint vertices}

We first make precise the analogue of the ``middle column'' in the grid setting.

\begin{definition}[Middle frontier]
Let $I = [i,j]$ be an interval in $\mathcal{R}$, and let $m = m(I)$ be its midpoint. The \emph{middle frontier} of $I$ is
\[
  F^{\mathrm{mid}}_I \;:=\;
  \{ v \in V_I : v \le m \text{ and } \exists w > m \text{ with } (v,w) \in E \}.
\]
\end{definition}

By construction, $F^{\mathrm{mid}}_I \subseteq \Front(m)$, so
\[
  |F^{\mathrm{mid}}_I| \;\le\; \omega.
\]

\begin{lemma}[Crossing existence]
\label{lem:crossing}
Let $(I,B_{\mathrm{left}},s,\mathrm{val})$ be a traceback subproblem with $I = [i,j]$ and $m = m(I)$, and suppose $s > m$. Let $P$ be any optimal witness path from $B_{\mathrm{left}}$ to $s$ whose internal vertices lie in $I$ (guaranteed by Lemma~\ref{lem:subproblem-invariant}). Then $P$ contains at least one vertex from $F^{\mathrm{mid}}_I$.
\end{lemma}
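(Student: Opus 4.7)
The plan is to apply a discrete intermediate-value argument to the topological indices of the vertices along $P$. Writing $P = v_0 v_1 \cdots v_k$ with $v_0 \in B_{\mathrm{left}}$ and $v_k = s$, the fact that every edge $(v_r,v_{r+1})$ respects $\tau$ forces a strictly increasing integer sequence $v_0 < v_1 < \cdots < v_k$. Under the subproblem invariant $B_{\mathrm{left}} \subseteq \{1,\dots,i-1\}\cup S_{\mathrm{src}}$, the starting vertex satisfies $v_0 \le i-1 < i \le m$ in the generic non-root case (source vertices at the root call are handled by the same inequality once one checks that each source selected as $v_0$ has index at most $m$ whenever $T\ge 2$), and by hypothesis $v_k = s > m$, so the sequence necessarily straddles~$m$.

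I would then set $a = \max\{r : v_r \le m\}$, which is well-defined because $r = 0$ already satisfies the bound. By maximality, $v_a \le m < v_{a+1}$, and the edge $(v_a, v_{a+1}) \in E$ witnesses that $v_a$ has an outgoing neighbor of index strictly greater than $m$. It remains to place $v_a$ in $V_I$: if $a \ge 1$ then $v_a$ is an internal vertex of $P$ and therefore lies in $I = [i,j]$ by the lemma's hypothesis, so combining $v_a \le m$ with the crossing edge yields $v_a \in F^{\mathrm{mid}}_I$, proving the claim in the main case.

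The principal obstacle I expect is the corner case $a = 0$, in which $v_1 > m$ and no internal vertex of $P$ lands in $[i,m]$. Here $v_0 \notin V_I$, so the literal definition of $F^{\mathrm{mid}}_I$ narrowly misses the only crossing endpoint. I would resolve this either by (i) observing that such ultra-short residual path segments can be absorbed into the base case once the recursion has driven $|I|$ below the threshold $B_{\mathrm{base}}$, so Lemma~\ref{lem:crossing} need never be invoked in that regime, or (ii) mildly widening $F^{\mathrm{mid}}_I$ to also include the boundary vertices in $B_I$ whose outgoing edges straddle~$m$; since $B_I \subseteq \Front(i-1)$, these extra candidates still fit inside the $O(\omega)$ workspace budget established in Section~\ref{sec:universal}, so the downstream decomposition identity of Proposition~\ref{prop:decomposition} and the overall space bound remain intact.
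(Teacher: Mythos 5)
Your main argument is exactly the paper's: the indices along $P$ are strictly increasing because every edge respects $\tau$, so one takes the unique crossing index $a=\max\{r: v_r\le m\}$ and uses the edge $(v_a,v_{a+1})$ with $v_{a+1}>m$ to certify $v_a\in F^{\mathrm{mid}}_I$, with $v_a\in V_I$ following from $v_a$ being an internal vertex of $P$ when $a\ge 1$. Up to that point the two proofs coincide, and that part is correct.

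The corner case you flag ($a=0$, i.e.\ $v_1>m$, so $P$ has no vertex in $[i,m]$) is genuine, and you should know that the paper's own proof does not dispose of it: it asserts that ``the vertex $v_r$ lies in $V_I$'' because internal vertices lie in $I$, which is a non sequitur when $r=0$ and $v_0\in B_{\mathrm{left}}\setminus V_I$; it also assumes $v_0<i$, which fails at the root call where $B_{\mathrm{left}}=S_{\mathrm{src}}\subseteq V_I$ (a source with index $>m$ even yields a path that crosses $m$ nowhere, an outright counterexample to the lemma as stated). Of your two proposed repairs, (i) does not work: Assumption~\ref{ass:locality} bounds in-degree but not the span $v-u$ of an edge, so a boundary vertex $v^\star$ can have an edge jumping past the midpoint of an interval of \emph{any} length, not only intervals below $B_{\mathrm{base}}$. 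Repair (ii) --- enlarging the candidate set to include boundary vertices of the subproblem that have an edge to some $w>m$ --- is the right kind of fix and stays within the $O(\omega)$ budget (for non-root calls $B_{\mathrm{left}}$ is a singleton, and at the root the only troublesome boundary vertices are sources $>m$, which require a separate ``no crossing'' branch rather than a wider frontier). But be explicit that this amends the definition of $F^{\mathrm{mid}}_I$, and hence the statements of Lemma~\ref{lem:crossing}, Proposition~\ref{prop:decomposition}, and the midpoint-selection step, rather than proving the lemma as written; as written, the lemma is false in the $a=0$ case, and your proposal is more honest about this than the paper is.
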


\begin{proof}
Let $P = (v_0,\dots,v_k)$, with $v_0 \in B_{\mathrm{left}}$ and $v_k = s$. Since $v_0 < i \le m$ and $s \in I$ with $s > m$, there exists an index $r$ with $v_r \le m$ and $v_{r+1} > m$. Because all internal vertices of $P$ lie in $I$ and $v_0 \notin I$, the vertex $v_r$ lies in $V_I$. The edge $(v_r,v_{r+1}) \in E$ then certifies $v_r \in F^{\mathrm{mid}}_I$.
\end{proof}

Lemma~\ref{lem:crossing} guarantees that any optimal path consistent with $\mathrm{val}$ must pass through at least one candidate vertex in $F^{\mathrm{mid}}_I$ whenever $s$ lies to the right of the midpoint. Our goal is to identify a vertex on this frontier that lies on a canonical witness path.

\subsection{Forward-only evaluation of prefix and suffix costs}

The classical Hirschberg algorithm uses both a forward DP (from the source) and a backward DP (from the sink) to evaluate candidate midpoints. In our general DAG setting, the reverse graph need not have small frontier width, so backward DP may be space-inefficient. Instead, we compute both ``prefix'' and ``suffix'' costs using forward passes and local recomputation, relying on Lemma~\ref{lem:local-dp} and the invariant in Lemma~\ref{lem:subproblem-invariant}.

Throughout this subsection we assume $s > m(I)$, so that the middle frontier is nontrivial in the sense of Lemma~\ref{lem:crossing}.

\paragraph{Prefix values $f(v)$.}
For the subproblem $(I,B_{\mathrm{left}},s,\mathrm{val})$, define $f(v)$ for $v \in F^{\mathrm{mid}}_I$ as
\[
  f(v) \;:=\; \text{optimal value of a path from some } u \in B_{\mathrm{left}} \text{ to } v,
\]
subject to the condition that all internal vertices of the path lie in $I \cap [i,m]$. By Lemma~\ref{lem:local-dp} (applied to the interval $[i,m]$ with boundary $B_{\mathrm{left}}$), we can compute all $f(v)$ for $v \in F^{\mathrm{mid}}_I$ in a \emph{single} forward DP pass over $[i,m]$:
\begin{itemize}
  \item we treat $B_{\mathrm{left}}$ as the boundary with known values;
  \item we evaluate the DP recurrence forward from $i$ up to $m$, maintaining an $O(\omega)$-sized frontier buffer;
  \item when we reach time $m$, all $x_v$ for $v \in F^{\mathrm{mid}}_I$ are available and we set $f(v) = x_v$.
\end{itemize}

\paragraph{Suffix values $g(v)$.}
For the same subproblem, define $g(v)$ for $v \in F^{\mathrm{mid}}_I$ as
\[
  g(v) \;:=\; \text{optimal value of a path from $v$ to $s$}
\]
subject to the condition that all internal vertices of the path lie in $I$.

We compute $g(v)$ for each $v \in F^{\mathrm{mid}}_I$ by \emph{repeated} forward passes restricted to $I$:
\begin{enumerate}
  \item Treat $v$ as a boundary vertex whose DP value is initialized to the multiplicative identity of the semiring $\mathcal{S}$, and treat all vertices $u < v$ as absent for this local DP.
  \item Apply a forward scanning procedure analogous to Lemma~\ref{lem:local-dp} to the interval $[v,j]$ with boundary $\{v\}$ to recompute the DP in $I \cap [v,j]$. As shown in the proof of Lemma~\ref{lem:local-dp}, this maintains an active set that is a subset of the global frontier $\Front(k)$, thus using at most $O(\omega)$ workspace.
  \item At the index $s$, read off the recomputed value $x_s$ and set $g(v) := x_s$.
\end{enumerate}
Because the invariant in Lemma~\ref{lem:subproblem-invariant} ensures that all internal vertices of the relevant suffix paths lie in $I$, ignoring vertices outside $I$ does not change the optimal values. Each such pass uses $O(\omega)$ space (in words). Since we process candidates $v \in F^{\mathrm{mid}}_I$ sequentially and $|F^{\mathrm{mid}}_I| \le \omega$, the total workspace remains $O(\omega)$, at the cost of an extra factor of $\omega$ in time.

\subsection{Midpoint selection and recursive decomposition}

We now argue that the values $f(v)$ and $g(v)$ are sufficient to locate a canonical midpoint on the optimal path when $s > m(I)$.

\begin{proposition}[Value decomposition]
\label{prop:decomposition}
Let $(I,B_{\mathrm{left}},s,\mathrm{val})$ be a traceback subproblem with midpoint $m = m(I)$, and suppose $s > m$. Then
\[
  \mathrm{val} \;=\; \bigoplus_{v \in F^{\mathrm{mid}}_I} \bigl( f(v) \otimes g(v) \bigr).
\]
\end{proposition}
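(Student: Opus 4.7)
The plan is to prove the identity by sandwiching $\mathrm{val}$ between the two sides of the claimed equation, using the semiring axioms of Assumption~\ref{ass:max} together with the crossing and invariance lemmas already in place.

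For the direction $\mathrm{val} \preceq \bigoplus_{v \in F^{\mathrm{mid}}_I} (f(v) \otimes g(v))$, I would first invoke Lemma~\ref{lem:subproblem-invariant} to fix an optimal witness path $P = (v_0, v_1, \dots, v_k)$ from $v_0 \in B_{\mathrm{left}}$ to $v_k = s$ whose internal vertices lie in $I$ and whose value equals $\mathrm{val}$. Since $s > m(I)$, Lemma~\ref{lem:crossing} produces an index $r$ with $v^\circ := v_r \in F^{\mathrm{mid}}_I$. Splitting $P$ at $v^\circ$ yields a prefix of value $\alpha := x_{v_0} \otimes w_{v_0,v_1} \otimes \cdots \otimes w_{v_{r-1}, v^\circ}$ and a suffix of edge-weight-only value $\beta := w_{v^\circ, v_{r+1}} \otimes \cdots \otimes w_{v_{k-1}, s}$, so $\mathrm{val} = \alpha \otimes \beta$ along $P$. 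The prefix is a competitor in the max defining $f(v^\circ)$ (its internal vertices lie in $[i, v^\circ - 1] \subseteq I \cap [i, m]$, since $v^\circ \le m$), and the suffix is a competitor in the max defining $g(v^\circ)$ (whose initialization with $\mathbf{1}$ makes it the plain $\otimes$-product of edge weights on a path from $v^\circ$ to $s$ within $I$). Hence $\alpha \preceq f(v^\circ)$ and $\beta \preceq g(v^\circ)$, and monotonicity of $\otimes$ gives $\mathrm{val} \preceq f(v^\circ) \otimes g(v^\circ) \preceq \bigoplus_{v \in F^{\mathrm{mid}}_I} (f(v) \otimes g(v))$.

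For the reverse direction, I would fix any $v \in F^{\mathrm{mid}}_I$ and use distributivity of $\otimes$ over $\oplus$ to expand $f(v) \otimes g(v)$ as the $\oplus$-supremum over pairs $(Q_1, Q_2)$, where $Q_1$ ranges over paths from $B_{\mathrm{left}}$ to $v$ with internal vertices in $I \cap [i, m]$ and $Q_2$ ranges over paths from $v$ to $s$ with internal vertices in $I$. Concatenating at $v$ produces a valid path $Q_1 \cdot Q_2$ from $B_{\mathrm{left}}$ to $s$ with all internal vertices in $I$, whose semiring value equals $\mathrm{val}(Q_1) \otimes \mathrm{val}(Q_2)$ because the identity $\mathbf{1}$ used to initialize $g$ at $v$ drops out. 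Since $\mathrm{val}$ is the optimum over precisely this class of paths (by the subproblem definition and Lemma~\ref{lem:subproblem-invariant}), every term in the expansion is $\preceq \mathrm{val}$, and hence $f(v) \otimes g(v) \preceq \mathrm{val}$; taking the $\oplus$ over $v \in F^{\mathrm{mid}}_I$ preserves this bound.

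The main subtlety to pin down carefully will be the asymmetric normalization between $f$ and $g$: the forward pass for $f$ inherits the true boundary values $x_u$ on $B_{\mathrm{left}}$, whereas the forward pass for $g$ uses $\mathbf{1}$ at the splitting vertex. This asymmetry is precisely what makes $f(v) \otimes g(v)$ equal the value of a concatenated path at $v$ without double-counting the weight of $v$ itself or re-introducing a source contribution; once that bookkeeping is made explicit, both inequalities follow directly from distributivity and monotonicity, with no further structural input required.
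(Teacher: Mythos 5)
Your proof is correct and follows essentially the same route as the paper's: for the upper bound, split an optimal witness path at a crossing vertex $v^\circ \in F^{\mathrm{mid}}_I$ supplied by Lemma~\ref{lem:crossing}, and for the lower bound, concatenate prefix and suffix witnesses at each $v \in F^{\mathrm{mid}}_I$ and compare against the optimality of $\mathrm{val}$. Your treatment of the first direction is in fact marginally cleaner than the paper's, since you only need the competitor inequalities $\alpha \preceq f(v^\circ)$, $\beta \preceq g(v^\circ)$ together with monotonicity of $\otimes$, whereas the paper asserts the exact equalities $\mathrm{value}(P_{\mathrm{pref}}) = f(v^\circ)$ and $\mathrm{value}(P_{\mathrm{suff}}) = g(v^\circ)$, which require a further (implicit) exchange argument.
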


\begin{proof}
Let $P$ be any optimal witness path from $B_{\mathrm{left}}$ to $s$ whose internal vertices lie in $I$, as in Lemma~\ref{lem:subproblem-invariant}. By Lemma~\ref{lem:crossing}, $P$ crosses the cut at $m$ at some vertex $v^\circ \in F^{\mathrm{mid}}_I$. Split $P$ at $v^\circ$ into a prefix path $P_{\mathrm{pref}}$ from $B_{\mathrm{left}}$ to $v^\circ$ with internal vertices in $I \cap [i,m]$ and a suffix path $P_{\mathrm{suff}}$ from $v^\circ$ to $s$ with internal vertices in $I \cap [v^\circ+1,j] \subseteq I$.

By definition of $f(\cdot)$, the semiring value along $P_{\mathrm{pref}}$ is at most $f(v^\circ)$, and by optimality of $P$ and consistency of local recomputation (Lemma~\ref{lem:local-consistency}) we in fact have equality:
\[
  \text{value}(P_{\mathrm{pref}}) \;=\; f(v^\circ).
\]
Similarly, the semiring value along $P_{\mathrm{suff}}$ is at most $g(v^\circ)$ by definition of $g(\cdot)$, with equality for the same reasons:
\[
  \text{value}(P_{\mathrm{suff}}) \;=\; g(v^\circ).
\]
The value of $P$ is then
\[
  \mathrm{val} \;=\; \text{value}(P) \;=\; \text{value}(P_{\mathrm{pref}}) \otimes \text{value}(P_{\mathrm{suff}})
  \;=\; f(v^\circ) \otimes g(v^\circ).
\]

Conversely, for any $v \in F^{\mathrm{mid}}_I$, consider an optimal path $Q_{\mathrm{pref}}$ from $B_{\mathrm{left}}$ to $v$ achieving $f(v)$ (within $I \cap [i,m]$) and an optimal path $Q_{\mathrm{suff}}$ from $v$ to $s$ achieving $g(v)$ (within $I$). Concatenating $Q_{\mathrm{pref}}$ and $Q_{\mathrm{suff}}$ yields a valid path $Q$ from $B_{\mathrm{left}}$ to $s$ whose internal vertices lie in $I$. Its semiring value is exactly $f(v) \otimes g(v)$ by associativity of $\otimes$. Since $\mathrm{val}$ is the optimal value over all such paths (by the definition of the traceback subproblem), we must have
\[
  \mathrm{val} \;\succeq\; f(v) \otimes g(v)
\]
for every $v \in F^{\mathrm{mid}}_I$, and equality holds for $v = v^\circ$. Taking the supremum (which is $\oplus$) over $v$ yields
\[
  \mathrm{val} \;=\; \bigoplus_{v \in F^{\mathrm{mid}}_I} \bigl( f(v) \otimes g(v) \bigr),
\]
as claimed.
\end{proof}

Because the semiring is max-type and we apply a deterministic tie-breaking rule at this step (e.g., selecting the vertex with the smallest index), the algorithm makes a \emph{canonical} choice:
\[
  v^\star \;=\; \argmax_{v \in F^{\mathrm{mid}}_I} \bigl( f(v) \otimes g(v) \bigr).
\]
This selection defines a hierarchy of tie-breaking: structural ties at the midpoint are resolved by vertex order, while local ties within the base case are resolved by edge order (Assumption~\ref{ass:max}). Proposition~\ref{prop:decomposition} guarantees that $v^\star$ lies on at least one optimal path, decomposing the subproblem into two strictly smaller ones:
\begin{itemize}
  \item a left subproblem $(I_L, B_{\mathrm{left}}, v^\star, f(v^\star))$ for paths from $B_{\mathrm{left}}$ to $v^\star$ with internal vertices in $I_L$;
  \item a right subproblem $(I_R, \{v^\star\}, s, g(v^\star))$ for paths from $v^\star$ to $s$ with internal vertices in $I_R$.
\end{itemize}

\subsection{The recursive procedure}

We summarize the above reasoning into a recursive algorithm. Fix a base-case threshold $B_{\mathrm{base}} = \lceil (\log T)^c \rceil$ for a constant $c \ge 1$, and write $\polylog T$ for $(\log T)^{O(1)}$.

\begin{algorithm}[H]
\caption{Universal Hirschberg Traceback (high-level)}
\label{alg:univ-hirschberg}
\begin{algorithmic}[1]
\State Compute $x_t$ by a forward DP sweep over $G$ using $O(\omega)$ space.
\State Call $\textsc{Traceback}([1,T], S_{\mathrm{src}}, t, x_t)$.
\end{algorithmic}

\medskip

\noindent
\textbf{Procedure} $\textsc{Traceback}(I = [i,j], B_{\mathrm{left}}, s, \mathrm{val})$:
\begin{enumerate}
  \item \textbf{Base case:} If $|I| \le B_{\mathrm{base}}$, perform a local DP restricted to $I$:
  \begin{itemize}
    \item treat $B_{\mathrm{left}}$ as a boundary with fixed values $\{x_u\}_{u \in B_{\mathrm{left}}}$;
    \item run a forward DP on the induced subgraph of vertices in $I$, recomputing $x_v$ and recording the predecessor $\pred(v)$ for all $v \in I$ (using the local edge tie-breaking $\prec$);
    \item backtrack along $\pred(\cdot)$ from $s$ until the predecessor lies in $B_{\mathrm{left}}$ (or equals $\bot$), outputting the corresponding path segment from left to right.
  \end{itemize}
  This uses $O(\omega + B_{\mathrm{base}})$ space (in words) by Lemma~\ref{lem:local-dp} and the bounded size of $I$.
  \item \textbf{Recursive step:} If $|I| > B_{\mathrm{base}}$:
  \begin{enumerate}
    \item Let $m = m(I)$.
    \item \textbf{If $s \le m$:} then all vertices on any path from $B_{\mathrm{left}}$ to $s$ lie in $[i,m]$ by monotonicity of $\tau$. Recurse on the left child:
    \[
      \textsc{Traceback}([i,m], B_{\mathrm{left}}, s, \mathrm{val}).
    \]
    \item \textbf{If $s > m$:}
    \begin{enumerate}
      \item Form the middle frontier $F^{\mathrm{mid}}_I$.
      \item Using Lemma~\ref{lem:local-dp}, compute $f(v)$ for all $v \in F^{\mathrm{mid}}_I$ via a single forward DP on $[i,m]$ with boundary $B_{\mathrm{left}}$.
      \item For each $v \in F^{\mathrm{mid}}_I$, compute $g(v)$ by a forward DP on $[v,j]$ with boundary $\{v\}$ as described above.
      \item Let $v^\star$ be the canonical maximizer of $f(v) \otimes g(v)$ over $v \in F^{\mathrm{mid}}_I$.
      \item Recursively call
      \[
        \textsc{Traceback}([i,m], B_{\mathrm{left}}, v^\star, f(v^\star)),
      \]
      and output the resulting path segment.
      \item Then call
      \[
        \textsc{Traceback}([m+1,j], \{v^\star\}, s, g(v^\star)),
      \]
      and output the resulting path segment, concatenating them.
    \end{enumerate}
  \end{enumerate}
\end{enumerate}
\end{algorithm}

The recursion terminates because each call either reduces the interval length by at least one (in the case $s \le m$) or splits $I$ into two strictly smaller subintervals (in the case $s > m$), and we stop once $|I| \le B_{\mathrm{base}}$.

\subsection{Space analysis}

We now bound the work-tape space used by Algorithm~\ref{alg:univ-hirschberg}.

\begin{theorem}[Space complexity]
\label{thm:universal-space}
Under Assumptions~\ref{ass:locality} and~\ref{ass:max}, the Universal Hirschberg algorithm computes a canonical optimal witness path to $t$ using space
\[
  \SPACE_{\mathrm{traceback}} \;=\; O(\omega \log T + \polylog T),
\]
measured in tape cells over a fixed finite alphabet.
\end{theorem}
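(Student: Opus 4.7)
The plan is to decompose the total workspace into three contributions---(i) persistent state on the recursion stack, (ii) transient scratch at the currently active frame, and (iii) base-case local-DP workspace---bound each, and sum them before converting from machine words to tape cells. Lemma~\ref{lem:depth} already pins the recursion depth at $O(\log T)$, so the stack holds at most $O(\log T)$ frames.

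The key bookkeeping observation for (i) is that at every recursive call $B_{\mathrm{left}}$ is either the initial source set $S_{\mathrm{src}}$ (represented implicitly, since the initialization values $a_v$ can be read from the read-only input tape on demand) or a singleton $\{v^\star\}$ inherited from the nearest ancestor that took the right branch. A simple induction on the recursion tree, tracking the two recursive calls in Algorithm~\ref{alg:univ-hirschberg}, formalizes this, so $B_{\mathrm{left}}$ can be encoded in $O(1)$ words per frame (a one-bit tag plus at most one vertex index). Together with the interval endpoints $(i,j)$, the target $s$, the value $\mathrm{val}$, and---once chosen---the triple $(v^\star, f(v^\star), g(v^\star))$ needed to parametrize the two child subproblems, each frame carries $O(1)$ words of persistent data, so the stack contributes $O(\log T)$ words overall.

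For (ii), when the procedure enters the case $s > m$, it enumerates $F^{\mathrm{mid}}_I$ (size $\le \omega$), computes $f(v)$ for all $v \in F^{\mathrm{mid}}_I$ by a single forward pass on $[i,m]$ via Lemma~\ref{lem:local-dp}, and then computes each $g(v)$ by a forward pass on $[v,j]$ with boundary $\{v\}$---again in $O(\omega)$ words per pass; since the candidates are processed sequentially the DP buffer is reused, and the table of $(v, f(v), g(v))$ entries takes another $O(\omega)$ words. All this scratch is released before descending into the children, so at most one frame at a time incurs the $O(\omega)$ overhead. For (iii), the base case uses $O(\omega + B_{\mathrm{base}}) = O(\omega + \polylog T)$ words by construction, and the initial global DP sweep of Step~1 of Algorithm~\ref{alg:univ-hirschberg} likewise uses $O(\omega)$ words before recursion begins.

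Summing, the total is $O(\omega + \log T + \polylog T) = O(\omega + \polylog T)$ machine words, which at $O(\log T)$ cells per word yields $O(\omega \log T + \polylog T)$ tape cells, as claimed; the running time is $\poly(T)$ since the recursion tree has $\poly(T)$ nodes, each performing $\poly(T)$ work via the forward sweeps of Lemma~\ref{lem:local-dp}. Correctness of the emitted sequence as the canonical witness follows from Lemma~\ref{lem:subproblem-invariant} combined with Lemma~\ref{lem:local-consistency}, with Proposition~\ref{prop:decomposition} certifying each midpoint choice. The main obstacle I expect is the $B_{\mathrm{left}}$ bookkeeping: one must argue carefully that the algorithm never needs to materialize an explicit list of up to $\omega$ boundary vertices at any stack frame, and that the implicit ``source set'' representation is compatible with the local-DP initialization demanded by Lemma~\ref{lem:local-dp}. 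Without this structural observation, a naive accounting would inflate stack storage to $O(\omega \log T)$ words and yield only the weaker bound $O(\omega \log^2 T)$ cells.
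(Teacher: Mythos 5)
Your proposal is correct and follows essentially the same decomposition as the paper's proof: a single reused $O(\omega)$-word DP buffer (never more than one forward pass active at a time), plus an $O(\log T)$-depth stack whose frames avoid materializing $B_{\mathrm{left}}$ because it is always either the implicit source set or a singleton $\{v^\star\}$. Your explicit characterization of $B_{\mathrm{left}}$ as ``the $v^\star$ of the nearest right-branching ancestor, else $S_{\mathrm{src}}$'' is a slightly sharper statement of the paper's ``recoverable from the parent frame'' remark, but the argument and the resulting bound are the same.
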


\begin{proof}
The space usage consists of two components: (i) DP evaluation workspace and (ii) recursion stack.

\smallskip\noindent
\emph{DP workspace.}
Every forward DP pass (global valuation, prefix computation for $f(v)$, suffix computations for $g(v)$, and base-case local tables) uses at most $O(\omega)$ words to store a frontier and a constant amount of local metadata (indices, loop counters). These passes do not run concurrently: at any moment, at most one forward DP is active. Since one word corresponds to $O(\log T)$ tape cells, the total DP workspace is $O(\omega \log T)$ cells.

\smallskip\noindent
\emph{Recursion stack.}
The recursion tree has depth $O(\log T)$ by Lemma~\ref{lem:depth}, and the additional ``$s \le m$'' branch cannot increase this depth asymptotically because it simply discards the right half without further branching. Each stack frame stores:
\begin{itemize}
  \item the interval endpoints $i,j$ (requiring $O(\log T)$ bits),
  \item the sink index $s$ (another $O(\log T)$ bits),
  \item the target value $\mathrm{val}$ (whose representation we assume to be $O(\polylog T)$ bits under standard encodings of weights in the semiring $\mathcal{S}$),
  \item a constant number of control flags (e.g.\ whether we are in the $s \le m$ or $s > m$ case, which child is active, where to resume).
\end{itemize}
We do \emph{not} store the entire set $B_{\mathrm{left}}$ explicitly at each level: for the right child, $B_{\mathrm{left}}$ is always a singleton $\{v^\star\}$ with its value $g(v^\star)$, and for the left child, $B_{\mathrm{left}}$ coincides with that of the parent and can be recovered from the parent frame. In either case, the additional metadata per frame is $O(\polylog T)$ bits.

Multiplying by the recursion depth, the total stack space is $O(\log T \cdot \polylog T) = O(\polylog T)$ bits.

\smallskip\noindent
Combining the two contributions yields the stated bound $O(\omega \log T + \polylog T)$ cells.
\end{proof}

\begin{remark}[Time complexity]
The algorithm may perform multiple forward DP passes per recursion level when $s > m(I)$ (one for $f(\cdot)$ and up to $|F^{\mathrm{mid}}_I| \le \omega$ for the various $g(\cdot)$). In the worst case, summing over all recursion levels yields total time
\[
  T_{\mathrm{total}} \;=\; T \cdot \poly(\omega, \log T),
\]
for constant $\degmax$. We do not optimize time here: our focus is on space. In settings where the reversed graph $G^R$ also has frontier width $O(\omega)$, one can replace the per-candidate forward suffix passes by a single backward DP, recovering Hirschberg-like $O(T)$ time while preserving the space bound.
\end{remark}

\section{Applications}
\label{sec:applications}

Theorem~\ref{thm:universal-space} gives a structural upper bound on traceback space in terms of the frontier width $\omega$ and the number of states $T$. We now instantiate this result for several standard DP families, recovering known bounds as special cases and obtaining new guarantees in asymmetric and constrained regimes.

Throughout this section, we work in the machine model of Section~\ref{sec:model}. We assume that:
\begin{itemize}
  \item weights and DP values admit an encoding using $O(\log T)$ tape cells (one machine word);
  \item primitive semiring operations and comparisons run in time and space polynomial in $\log T$;
  \item the DP recurrences fit the max-type semiring framework of Assumption~\ref{ass:max} with deterministic tie-breaking, so that the canonical predecessor map $\pred$ is well-defined.
\end{itemize}

As in Theorem~\ref{thm:universal-space}, we write $\polylog T$ as shorthand for $(\log T)^{O(1)}$ (base~$2$ unless stated otherwise).

\subsection{Asymmetric sequence alignment}

Let $\Sigma$ be a finite alphabet. Given strings $A \in \Sigma^m$ and $B \in \Sigma^n$ with $m \le n$, the global alignment problem (including edit distance, Needleman--Wunsch, or affine-gap variants) is naturally formulated as a DP on the grid DAG $G_{m,n}$ described in Section~\ref{sec:warmup}.

Recall:
\begin{itemize}
  \item $T = |V| = (m+1)(n+1) = \Theta(mn)$;
  \item under a column-major or row-major topological order, the frontier width satisfies
  \[
    \omega(G_{m,n}) = \Theta(\min\{m,n\}) = \Theta(m) \quad \text{when } m \le n;
  \]
  \item the recurrence has constant in-degree and is max-type with deterministic tie-breaking (Section~\ref{sec:warmup}).
\end{itemize}

Applying Theorem~\ref{thm:universal-space} to this DAG yields the following.

\begin{corollary}[Asymmetric alignment]
\label{cor:asymmetric-alignment}
Let $A \in \Sigma^m$, $B \in \Sigma^n$ with $m \le n$, and consider any standard global alignment DP on the $m \times n$ grid with a max-type scoring scheme and deterministic tie-breaking. Then there is a deterministic algorithm that:
\begin{itemize}
  \item computes an optimal alignment between $A$ and $B$,
  \item runs in time polynomial in $mn$, and
  \item uses space
  \[
    \SPACE \;=\; O\bigl( m \log(mn) + \polylog(mn) \bigr),
  \]
  measured in tape cells over a fixed finite alphabet.
\end{itemize}
\end{corollary}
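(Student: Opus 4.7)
The plan is to verify that the $m \times n$ alignment grid satisfies every hypothesis of Theorem~\ref{thm:universal-space} and then instantiate that theorem with the concrete parameters $T = \Theta(mn)$ and $\omega = \Theta(m)$; nothing beyond this unpacking is really needed. First, I would formalize the alignment problem as a computational DP DAG $G_{m,n} = (V, E, S_{\mathrm{src}}, T_{\mathrm{sink}}, \tau)$ in the sense of Section~\ref{sec:model}: take $V = \{(i,j) : 0 \le i \le m,\ 0 \le j \le n\}$ with edges from $(i-1,j)$, $(i,j-1)$, and $(i-1,j-1)$ into $(i,j)$, a single source $(0,0)$ and sink $(m,n)$, and the column-major topological order $\tau(i,j) = j(m+1) + i$. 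Since $m \le n$, this order is the ``short-axis-first'' choice flagged in Section~\ref{sec:warmup} and is exactly what makes the frontier scale with $m$ rather than $n$.

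Next I would check the two assumptions of Section~\ref{sec:model}. Assumption~\ref{ass:locality} is immediate: each interior cell has in-degree at most three, so $\degmax = 3$. Assumption~\ref{ass:max} is satisfied by the standard match/mismatch/indel recurrence viewed as a max-plus computation over $(\mathbb{Z}, \max, +)$, with a fixed total order $\prec$ on the three incident edges of each vertex (e.g., diagonal $\prec$ vertical $\prec$ horizontal) serving as the deterministic tie-breaking rule; this makes the canonical predecessor map $\pred$ well-defined at every vertex. For the frontier width, a direct count at any cut $\ell = j(m+1) + i$ shows that the processed vertices with an outgoing edge into the unprocessed region form a suffix of column $j-1$ together with a prefix of column $j$, with total size $m + O(1)$. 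Hence $\omega(G_{m,n},\tau) = \Theta(m)$.

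Having verified the hypotheses, invoking Theorem~\ref{thm:universal-space} with $T = (m+1)(n+1) = \Theta(mn)$ and $\omega = \Theta(m)$ produces a deterministic polynomial-time traceback using space
\[
O(\omega \log T + \polylog T) \;=\; O(m \log(mn) + \polylog(mn))
\]
tape cells. The canonical witness path returned by the theorem is a monotone lattice path from $(0,0)$ to $(m,n)$ whose consecutive displacements (diagonal, horizontal, vertical) are exactly the alignment operations, so the path can be emitted to the write-only output tape as the alignment itself without any additional workspace.

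There is no real technical obstacle here; the substance of the corollary is verification, not construction. The only points I would take care to spell out are (i) why column-major rather than row-major is the correct topological order (row-major would give $\omega = \Theta(n)$ and erase the advertised asymmetry), (ii) that the frontier count above really is $m+O(1)$ for every cut, not only for cuts that fall at a column boundary, and (iii) that translating the abstract canonical path returned by Theorem~\ref{thm:universal-space} into the alignment string is done on the output tape and therefore does not inflate the space bound.
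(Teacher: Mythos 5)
Your proposal is correct and follows essentially the same route as the paper: verify that the grid DAG satisfies Assumptions~\ref{ass:locality} and~\ref{ass:max}, establish $T = \Theta(mn)$ and $\omega = \Theta(m)$ under a column-major topological order, and instantiate Theorem~\ref{thm:universal-space}. If anything you are slightly more careful than the text, which in Sections~\ref{sec:warmup} and~\ref{sec:applications} loosely asserts that either row- or column-major order yields frontier width $\Theta(\min\{m,n\})$, whereas, as you correctly note, only the order whose frontier is a single column of the shorter dimension achieves $\Theta(m)$ when $m \le n$.
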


In particular, for regimes of practical interest such as long-read or genome-to-read alignment where $m \ll n$, the space usage is essentially $O(m)$ words, matching the space needed merely to evaluate the DP value using a rolling row (or column) buffer. Classical Hirschberg achieves $O(m+n)$ words of space; Corollary~\ref{cor:asymmetric-alignment} shows that in the offline random-access model, traceback can be performed with memory comparable to the \emph{shorter} sequence length, up to lower-order polylogarithmic terms, independently of the larger dimension $n$.

\subsection{One-dimensional recurrences}

Consider a DP defined on a linear chain $V = \{1,\dots,T\}$, where each state $x_i$ depends only on the previous $k$ states for some fixed constant $k$:
\[
  x_i \;=\; F(x_{i-1}, x_{i-2}, \dots, x_{i-k}), \qquad i > k,
\]
and suppose this recurrence can be encoded as a max-type semiring DP satisfying Assumption~\ref{ass:max}, with deterministic tie-breaking so that there is a unique optimal witness sequence.

Typical examples include linear recurrences, simple filters, and DP formulations on paths or sequences with constant-width neighborhoods.

\begin{itemize}
  \item The underlying DAG is a line with bounded backward edges, so any cut between $i$ and $i+1$ needs to retain only the last $k$ values.
  \item Thus $\omega = \Theta(k) = \Theta(1)$, independent of $T$.
\end{itemize}

\begin{corollary}[1D traceback]
\label{cor:1d}
For any local recurrence on a chain of length $T$ with constant in-degree that fits the framework of Assumption~\ref{ass:max} and has a unique optimal witness sequence, the witness sequence can be reconstructed deterministically in space
\[
  \SPACE \;=\; O\bigl(\polylog T\bigr) \quad \text{cells}.
\]
\end{corollary}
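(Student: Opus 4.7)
The plan is to deduce Corollary~\ref{cor:1d} as a direct specialization of Theorem~\ref{thm:universal-space} after verifying that a linear constant-lookback recurrence is a width-$O(1)$ DP DAG. First I would cast the recurrence as a DAG with vertex set $V = \{1,\dots,T\}$, edges $(i-j,i) \in E$ for each $1 \le j \le k$ such that $i-j \ge 1$, and the identity topological order $\tau(i) = i$. Under this encoding, the in-degree of every vertex is at most $k$, so Assumption~\ref{ass:locality} holds with $\degmax = k = O(1)$; the max-type semiring structure and deterministic tie-breaking hypothesis are granted by the statement of the corollary, so Assumption~\ref{ass:max} also holds.

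Next I would bound the frontier width under $\tau$. A vertex $v \le \ell$ belongs to $\Front(\ell)$ only if some successor $w > \ell$ satisfies $w - v \le k$, which restricts $v$ to the window $\{\ell-k+1,\dots,\ell\}$; hence $|\Front(\ell)| \le k$ and $\omega(G,\tau) = O(1)$. Plugging this into Theorem~\ref{thm:universal-space} gives traceback work-tape space $O(\omega \log T + \polylog T) = O(\polylog T)$ cells, which is exactly the claimed bound. Since the witness sequence itself has length $T$, it is essential that it be emitted to the write-only output tape of the model in Section~\ref{sec:model}; under this convention the output length does not contribute to the measured work-tape space, and the algorithm of Section~\ref{sec:universal} writes the canonical witness left-to-right as each base-case backtrack completes, so no intermediate buffering of the path is needed.

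I do not anticipate a genuine obstacle here: the chain is the most benign instance of the general framework, and every ingredient of Theorem~\ref{thm:universal-space} specializes without modification. The only point worth a one-line sanity check is that the repeated forward passes used to evaluate suffix costs do not degrade the bound, but this is immediate because each such pass reuses the same constant-width rolling buffer, and the $O(\log T)$-deep recursion together with its per-frame $O(\polylog T)$ metadata already fits inside the advertised $O(\polylog T)$ budget. If one wishes to be explicit about constants, the base-case threshold $B_{\mathrm{base}} = \lceil (\log T)^c \rceil$ from Section~\ref{sec:universal} contributes the dominant $\polylog T$ term, while the $\omega \log T$ term collapses to $O(\log T)$ in this setting and is absorbed.
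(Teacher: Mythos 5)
Your proposal is correct and matches the paper's own argument: both cast the constant-lookback recurrence as a chain DAG under the identity topological order, observe that any frontier is contained in a window of $k$ consecutive indices so that $\omega = O(k) = O(1)$, and then invoke Theorem~\ref{thm:universal-space} to absorb the $\omega\log T$ term into the $\polylog T$ bound. Your additional remarks about the output tape and the base-case threshold are consistent with the model in Section~\ref{sec:model} and add nothing that changes the route.
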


This matches the intuition that traceback on a simple path should be ``almost free'' in space: the $O(\polylog T)$ term accounts only for indexing and recursion metadata, while the dependence on the structural width parameter is constant. It also aligns with the classical observation that recursive bisection on a line (essentially a one-dimensional Hirschberg) can be implemented in logarithmic space.

\subsection{Banded alignment}

In many bioinformatics applications, the optimal alignment between two sequences is expected to follow the main diagonal of the DP grid up to a bounded deviation. The \emph{banded alignment} problem restricts the state space to cells $(i,j)$ satisfying $|i-j| \le B/2$ for some bandwidth parameter $B$.

Let $N = \max\{m,n\}$, and consider an $N \times N$ DP grid restricted to a band of width $B$ around the main diagonal.

\begin{itemize}
  \item The number of states is $T = \Theta(BN)$.
  \item A column-wise or diagonal-wise topological sweep sees at most $B$ active cells at any cut, so $\omega = \Theta(B)$.
\end{itemize}

\begin{corollary}[Banded traceback]
\label{cor:banded}
Consider a banded global alignment DP on an $N \times N$ grid of bandwidth $B$, with a max-type scoring scheme and deterministic tie-breaking. Then an optimal alignment confined to the band can be reconstructed deterministically in space
\[
  \SPACE \;=\; O\bigl( B \log(BN) + \polylog(BN) \bigr).
\]
\end{corollary}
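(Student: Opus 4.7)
The plan is to verify that the banded alignment DP fits the framework of Section~\ref{sec:model} with the parameters $T = \Theta(BN)$ and $\omega = \Theta(B)$, and then invoke Theorem~\ref{thm:universal-space} as a black box; this mirrors the structure of the proofs that would establish Corollaries~\ref{cor:asymmetric-alignment} and~\ref{cor:banded}. So the content is entirely in the instantiation.

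First I would define the banded DP DAG $G_B$ explicitly: vertices are the band cells $V_B = \{(i,j) : 0 \le i,j \le N,\ |i-j| \le B/2\}$, edges are the three standard alignment edges $(i{-}1,j) \to (i,j)$, $(i,j{-}1) \to (i,j)$, $(i{-}1,j{-}1) \to (i,j)$ restricted to pairs with both endpoints in $V_B$, with the natural source $(0,0)$ and sink $(N,N)$. A direct count gives $|V_B| = T = \Theta(BN)$ in the regime $B \le N$. Assumption~\ref{ass:locality} holds with $\degmax = 3$ (restricting to the band can only reduce in-degree), and the max-type semiring with deterministic tie-breaking of Assumption~\ref{ass:max} carries over verbatim from the unbanded alignment recurrence hypothesized by the corollary.

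Next I would fix a column-major topological order $\tau$: process columns in increasing $j$, breaking ties within each column by increasing $i$. For any cut at the end of column $j$, the frontier $\Front(\ell)$ consists precisely of band cells in column $j$ whose outgoing edges reach column $j+1$, which is a subset of the at most $B+1$ band cells in column $j$. Intermediate cuts interleave the tail of column $j$ with the head of column $j+1$ and are still $O(B)$. Hence $\omega(G_B,\tau) = \Theta(B)$. Substituting $\omega = \Theta(B)$ and $T = \Theta(BN)$ into Theorem~\ref{thm:universal-space} yields
\[
  \SPACE_{\mathrm{traceback}} \;=\; O(\omega \log T + \polylog T) \;=\; O\bigl(B \log(BN) + \polylog(BN)\bigr),
\]
which is the claimed bound, and the output is the canonical optimal in-band alignment selected by the fixed tie-breaking rule.

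The only subtle step is semantic rather than combinatorial: one must observe that the DP defined on $G_B$ computes the optimal score \emph{among alignments confined to the band}, which is exactly the object Corollary~\ref{cor:banded} asks for (so there is no mismatch between ``optimal on $G_B$'' and ``optimal in the band''). Once this is noted, the argument reduces entirely to the two routine verifications above, with no genuine obstacle; in particular, Lemma~\ref{lem:local-consistency} applies to $G_B$ just as to the full grid, so the recursive traceback of Algorithm~\ref{alg:univ-hirschberg} operates correctly on the restricted DAG.
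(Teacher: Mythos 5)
Your proposal is correct and follows essentially the same route as the paper: Section~\ref{sec:applications} justifies Corollary~\ref{cor:banded} exactly by observing $T = \Theta(BN)$ and $\omega = \Theta(B)$ under a column-wise (or diagonal-wise) sweep and then invoking Theorem~\ref{thm:universal-space} as a black box. Your additional remarks---verifying Assumptions~\ref{ass:locality} and~\ref{ass:max} on the restricted DAG and noting that optimality on $G_B$ coincides with optimality among in-band alignments---are sound elaborations of details the paper leaves implicit.
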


Naively storing back-pointers across the band requires $\Theta(BN)$ words of space, while applying Hirschberg directly to the full grid would give an $O(N)$ words space bound that does not reflect the band constraint. Corollary~\ref{cor:banded} shows that in the structural setting captured by frontier width, the correct parameter governing traceback space is the band width $B$, not the full sequence length $N$.

\subsection{Graphs of bounded pathwidth}

Finally, we consider a more abstract structural setting. Let $G$ be a DP DAG whose underlying undirected graph admits a path decomposition of width $\mathrm{pw}(G)$. It is known that pathwidth bounds the size of a contiguous vertex separator along a path decomposition; by aligning the topological order $\tau$ with such a decomposition, one can ensure that the frontier width is bounded in terms of pathwidth.

\begin{lemma}
\label{lem:pathwidth-frontier}
Let $G$ be a DAG whose underlying undirected graph has pathwidth $\mathrm{pw}(G)$. Then there exists a topological order $\tau$ of $V$ consistent with the edge directions such that the frontier width satisfies
\[
  \omega(G) \;\le\; \mathrm{pw}(G) + 1.
\]
\end{lemma}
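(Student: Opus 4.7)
The plan is to turn a width-$p$ path decomposition of the underlying graph $H$ of $G$ into a topological order $\tau$ whose frontier at every cut is contained in a single bag, so that the bag-size bound $|X_i| \le p + 1$ immediately yields the frontier-width bound. First I would fix a nice path decomposition $(X_1, \dots, X_r)$ of $H$ with $|X_i| \le p+1$, in which consecutive bags differ by a single introduce or forget operation.

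The key reduction is to make the decomposition \emph{DAG-compatible}: for every directed edge $(u,v) \in E$ of $G$, the first-appearance index $a(v) := \min\{i : v \in X_i\}$ satisfies $a(u) \le a(v)$. I would argue that any path decomposition of $H$ can be rewritten as a DAG-compatible one of the same width. One way is to repeatedly swap adjacent introduce/forget operations in the nice decomposition, showing that each local swap that resolves an ``inversion'' (an edge $(u,v) \in E$ with $a(u) > a(v)$) is width-preserving; because $G$ is a DAG there is no cyclic obstruction, so the procedure terminates in a DAG-compatible decomposition of width $\le p$.

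Given a DAG-compatible decomposition, I define $\tau$ by sorting vertices by $a(v)$ and breaking ties using an arbitrary topological sort of the subgraph of $G$ induced on each block $\{v : a(v) = i\}$. This is a topological order of $G$ because, by construction, every edge $(u,v) \in E$ either satisfies $a(u) < a(v)$ or is contained in a single block, where the tie-break handles it. For the frontier bound, fix a cut at position $\ell$ and let $i^\star = a(v_\ell)$ be the first-appearance index of the last placed vertex. Any frontier vertex $v$ has $\tau(v) \le \ell$, so $a(v) \le i^\star$, and has an unordered out-neighbor $w$ with $\tau(w) > \ell$, so $a(w) \ge i^\star$. Since $(v,w) \in E$ there is a bag $X_k$ containing both, with $k \ge a(w) \ge i^\star$ and $k \le b(v)$; hence $a(v) \le i^\star \le b(v)$, and by the interval property of path decompositions $v \in X_{i^\star}$. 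Therefore $\Front(\ell) \subseteq X_{i^\star}$ and $|\Front(\ell)| \le p + 1$.

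The main obstacle is the DAG-compatible refinement step: showing that any width-$p$ path decomposition of $H$ can be reorganized so that first-appearance order agrees with the DAG direction without increasing width. Once this combinatorial claim is in hand, the construction of $\tau$ and the bag-containment argument for the frontier are a clean application of the interval and edge axioms of path decomposition, and the stated bound $\omega(G,\tau) \le \mathrm{pw}(G) + 1$ follows directly.
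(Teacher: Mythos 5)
The critical step of your proposal---that any width-$p$ path decomposition of the underlying graph can be reorganized into a ``DAG-compatible'' one of the same width, i.e.\ one in which $a(u) \le a(v)$ for every directed edge $(u,v)$---is false, and acyclicity of $G$ is not the relevant obstruction. Consider the in-star: vertices $v_1,\dots,v_n,c$ with edges $(v_i,c)$ for all $i$. The underlying graph is $K_{1,n}$, with pathwidth $1$ witnessed by the bags $\{c,v_1\},\{c,v_2\},\dots,\{c,v_n\}$. DAG-compatibility would force $a(v_i) \le a(c)$ for every $i$; since each $v_i$ is adjacent to $c$, its interval $[a(v_i),b(v_i)]$ must meet $[a(c),b(c)]$, which together with $a(v_i) \le a(c)$ forces $a(c) \in [a(v_i),b(v_i)]$, so the single bag $X_{a(c)}$ contains $c$ and all $n$ leaves. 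Hence every DAG-compatible path decomposition of this DAG has width at least $n$, and no sequence of width-preserving local swaps can produce one; your claim that each inversion-resolving swap preserves width cannot hold. Your closing argument (that the frontier at any cut is contained in the single bag $X_{i^\star}$) is correct as far as it goes, but it rests entirely on this unavailable refinement.

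The deeper issue is that the same example refutes the lemma as stated: in any topological order of the in-star, $c$ must come last, so the frontier just before $c$ contains all $n$ leaves and $\omega(G,\tau) = n$ for every admissible $\tau$, while $\mathrm{pw}(G)+1 = 2$. The paper's own argument is only a sketch which concedes that ``an arbitrary linear ordering minimizing vertex separation might not respect the edge directions'' and then retreats to the special case where the DP is defined over a path decomposition whose bag order already induces a topological sort---effectively assuming the compatibility your proposal tries (and necessarily fails) to establish. So the gap you flagged as ``the main obstacle'' is not a technical hurdle to be overcome; it is exactly where both your argument and the unqualified statement break down. A correct version needs an extra hypothesis, e.g.\ that $G$ admits a width-$p$ path decomposition whose first-appearance order is consistent with the edge directions (a directed-pathwidth-style condition), under which your bag-containment argument for the frontier goes through verbatim.
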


\begin{proof}[Proof sketch]
It is a standard result in structural graph theory that the pathwidth of a graph is equal to its \emph{vertex separation number} minus one (see, e.g., Kinnersley 1992). The vertex separation number is defined as the minimum over all linear orderings of vertices of the maximum cut size (where the cut size at index $i$ is the number of vertices $\le i$ with neighbors $> i$). 
However, an arbitrary linear ordering minimizing vertex separation might not respect the edge directions of the DAG $G$. We appeal to the fact that for DAGs, a topological sort consistent with the optimal vertex separation layout exists provided the layout is compatible. Alternatively, if we view the DP as being \emph{defined} over a path decomposition (as is common in tree-decomposition based DPs), the natural order of bags induces a topological sort where the frontier size is bounded by the bag size, which is at most $\mathrm{pw}(G)+1$.
\end{proof}

Combining Lemma~\ref{lem:pathwidth-frontier} with Theorem~\ref{thm:universal-space} yields:

\begin{corollary}[Pathwidth parameterization]
\label{cor:pathwidth}
Let $G$ be a DP DAG whose underlying graph has pathwidth $\mathrm{pw}(G)$, and suppose the recurrence satisfies Assumption~\ref{ass:max}. Then there exists a topological order under which an optimal witness path to any sink can be reconstructed in space
\[
  \SPACE \;=\; O\bigl( \mathrm{pw}(G) \log |V| + \polylog |V| \bigr).
\]
\end{corollary}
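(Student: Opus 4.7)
The plan is to prove Corollary~\ref{cor:pathwidth} as a straightforward two-step combination: first apply Lemma~\ref{lem:pathwidth-frontier} to exhibit a good topological order, then invoke Theorem~\ref{thm:universal-space} over that order. Concretely, given the DAG $G$ and its pathwidth $\mathrm{pw}(G)$, I would first call Lemma~\ref{lem:pathwidth-frontier} to obtain a topological order $\tau$ on $V$ that respects the edge directions and satisfies $\omega(G,\tau) \le \mathrm{pw}(G)+1$. This is the only place where the pathwidth hypothesis is used; after this step, all structural dependence on the decomposition is wrapped inside the single parameter $\omega$.

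Next I would verify that the hypotheses of Theorem~\ref{thm:universal-space} hold for the pair $(G,\tau)$. Assumption~\ref{ass:max} is given in the statement of the corollary. Assumption~\ref{ass:locality} is part of the standing DP DAG setting of Section~\ref{sec:model} and is inherited implicitly; I would make this explicit in the proof text, noting that the in-degree bound $\degmax$ is absorbed into the constants and does not interact with the pathwidth parameter. Since $\tau$ is a legal topological order and the frontier width under $\tau$ is finite and at most $\mathrm{pw}(G)+1$, the construction of Sections~\ref{sec:height-compression}--\ref{sec:universal} applies verbatim to $(G,\tau)$.

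Finally I would invoke Theorem~\ref{thm:universal-space} with $T = |V|$ and $\omega \le \mathrm{pw}(G)+1$, obtaining a traceback algorithm whose work-tape space is
\[
  O\bigl(\omega \log T + \polylog T\bigr) \;\le\; O\bigl((\mathrm{pw}(G)+1)\log|V| + \polylog|V|\bigr) \;=\; O\bigl(\mathrm{pw}(G)\log|V| + \polylog|V|\bigr),
\]
where the last equality folds the additive $\log|V|$ term into the polylogarithmic summand. This yields exactly the bound claimed.

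I do not expect any genuine obstacle here: the entire content sits in Lemma~\ref{lem:pathwidth-frontier} and Theorem~\ref{thm:universal-space}, both of which are already in hand. The only mild subtlety is that the pathwidth-minimizing vertex ordering guaranteed by the vertex-separation characterization is a priori a linear order on $V$, not necessarily a topological sort of $G$; the cleanest way around this, and the one I would emphasize in the write-up, is the ``bags in path-decomposition order'' construction mentioned in the proof sketch of Lemma~\ref{lem:pathwidth-frontier}, which automatically produces an ordering consistent with any DP whose recurrences are defined relative to that decomposition. With that choice of $\tau$ fixed up front, the rest of the argument is a single substitution into Theorem~\ref{thm:universal-space}.
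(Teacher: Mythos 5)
Your proposal matches the paper's own treatment exactly: the corollary is obtained by combining Lemma~\ref{lem:pathwidth-frontier} (to get a topological order with $\omega \le \mathrm{pw}(G)+1$) with Theorem~\ref{thm:universal-space}, and your handling of the subtlety about the vertex-separation ordering versus a topological sort mirrors the paper's proof sketch of that lemma. No gaps; this is the intended argument.
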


This connects our algorithmic result to structural graph parameters: up to polylogarithmic factors, the space required for witness reconstruction is controlled by the same width measure that controls the space needed for decision DPs on $G$. In particular, on bounded-pathwidth graph families, the ``price of reconstruction'' over and above evaluation is negligible.

\section{Lower Bounds, Limitations, and Model Variants}
\label{sec:lower-bounds}

We now discuss the extent to which the bound of Theorem~\ref{thm:universal-space} is optimal, and how much of it is driven by (i) intrinsic structure of the DP DAG (in particular, frontier width) versus (ii) the computational model (random access vs.\ streaming, multi-pass vs.\ single-pass). Throughout this section we distinguish carefully between:

\begin{itemize}
  \item \emph{forward-only single-pass models}, where the algorithm processes the vertices once in increasing topological order and never re-reads earlier parts of the input, and
  \item the \emph{offline random-access model} used in our upper bound, where the algorithm may perform multiple forward passes over arbitrary intervals of the DAG.
\end{itemize}

\subsection{Necessity of the frontier term}

The dependence on the frontier width $\omega$ cannot be removed even in very permissive single-pass models that allow re-computation, so long as the algorithm respects the topological order and never revisits earlier input. Intuitively, any algorithm that tries to decide \emph{which} vertex on a wide frontier lies on the optimal path must keep enough information to distinguish among many possibilities until it has seen the rest of the instance.

We formalize this for a forward-only, one-pass model over the topological order.

\begin{proposition}[Frontier lower bound in single-pass models]
\label{prop:omega-lb}
Let $\mathcal{A}$ be any deterministic algorithm that solves the Traceback Problem for all DP DAGs of frontier width at most $\omega$, under the following restrictions:
\begin{itemize}
  \item the input vertices and edges are presented in order of increasing $\tau$ (topological order);
  \item $\mathcal{A}$ is allowed a single left-to-right pass over this stream, and never re-reads earlier input;
  \item $\mathcal{A}$ may output the witness path in any order (e.g.\ from sink to source).
\end{itemize}
Then $\mathcal{A}$ must use $\Omega(\omega)$ bits of work-tape space in the worst case.
\end{proposition}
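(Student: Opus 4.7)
My plan is a standard fooling-set / indistinguishability argument: exhibit a family of DP DAGs $\{G_{x,i^*}\}$ of frontier width $O(\omega)$ and bounded in-degree, parametrised by $x \in \{0,1\}^\omega$ and $i^* \in [\omega]$, whose canonical witness is forced to reveal the bit $x_{i^*}$. Since the algorithm reads the stream once and cannot revisit the $x$-encoding prefix, its work-tape state at the end of that prefix must separate essentially all $2^\omega$ values of $x$; on a deterministic multitape machine over a fixed alphabet this translates to $\Omega(\omega)$ cells.

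The concrete construction I would use is as follows. From a source $s$, emit edges $(s,v_i)$ of weight $2 x_i$ for $i \in [\omega]$, so $x_{v_i} \in \{0,2\}$ records bit $x_i$, together with an auxiliary edge $(s,c)$ of weight $0$ giving a constant $x_c = 0$. After this prefix (on which every frontier has size at most $\omega + 1$), the suffix is a balanced binary multiplexer tree of in-degree $2$ whose leaves are the $v_i$ and whose edge weights depend only on $i^*$: weight $0$ along the unique root-to-$v_{i^*}$ path, and a sufficiently negative $-M$ on every other edge, so the root satisfies $x_r = 2 x_{i^*}$ and $\pred$ along the tree deterministically traces that path. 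The sink $t$ receives $(r,t)$ of weight $0$ and $(c,t)$ of weight $1$, so $x_t = \max(2 x_{i^*}, 1) \in \{1,2\}$ with no tie; hence $\pred(t) = c$ if $x_{i^*} = 0$ and $\pred(t) = r$ if $x_{i^*} = 1$, and the canonical witness is either $s \to c \to t$ or $s \to v_{i^*} \to \cdots \to r \to t$. All in-degrees are at most $2$, the max-plus semiring with any fixed edge ordering supplies the required deterministic tie-breaking, and every frontier has size at most $\omega + 1$, which can be absorbed into $\omega$ by using $\omega - O(1)$ input bits.

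For the lower bound itself, let $\sigma_w(x)$ denote $\mathcal{A}$'s work-tape state (together with head positions and control state) at the instant it finishes processing the prefix encoding $x$, and let $O_1(x)$ denote the string it has already written to the write-only output tape. Suppose $\sigma_w(x) = \sigma_w(x')$ for some $x \neq x'$. Then for every query suffix $S(j)$ the two computations share identical subsequent state evolution and therefore identical suffix output $O_2(j)$; choosing $j$ with $x_j \neq x'_j$, correctness demands $O_1(x) \cdot O_2(j) = W(x,j)$ and $O_1(x') \cdot O_2(j) = W(x',j)$ with $W(x,j) \neq W(x',j)$, forcing $O_1(x) \neq O_1(x')$. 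But $O_1(x)$ is determined before $i^*$ is read and must therefore be a common prefix of $\{W(x,j) : j \in [\omega]\}$; in our construction that common prefix is always contained in the fixed set $\{\epsilon,\, (s),\, (s,c),\, (s,c,t)\}$ of constant size. Injectivity of $x \mapsto (\sigma_w(x), O_1(x))$ together with the $O(1)$ range of $O_1$ gives $|\mathrm{range}(\sigma_w)| \ge 2^\omega / O(1) = 2^{\Omega(\omega)}$, hence $\Omega(\omega)$ work-tape cells.

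The main obstacle I expect is precisely this interaction between online output and the state argument: the write-only output tape does not contribute to the space measure, so the construction must be engineered to prevent $\mathcal{A}$ from smuggling $x$-information into the output during the prefix. The reveal gadget at $t$ --- pitting the semiring value $x_r \in \{0,2\}$ against the precomputed constant $1$ --- is what makes the witness tightly dependent on $x_{i^*}$ while simultaneously forcing the set of legally committable prefix outputs to be trivially small, so that essentially all $2^\omega$-way distinguishing information must reside on the work tape. Once this gadget is in place, the remaining checks (frontier width, in-degree, avoidance of ties in the decisive comparison) are routine.
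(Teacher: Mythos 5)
Your proposal is correct, and while it shares the same fooling-set skeleton as the paper's argument (count distinguishable work-tape configurations at a cut in the single-pass stream), your construction is genuinely different and in two respects more careful. The paper encodes all $2^\omega$ activity patterns in a gadget placed early in the stream, keeps the suffix essentially fixed, and asserts that the $2^\omega$ instances induce $2^\omega$ distinct canonical witness paths, so that equal internal states after the gadget would force equal outputs. You instead split the instance \textsc{Index}-style: the $\omega$ data bits live in the prefix, and the query index $i^*$ lives in the suffix (the multiplexer tree plus the $x_r$-versus-constant reveal gadget at $t$), so the algorithm provably cannot know which bit matters until the prefix is gone and must carry all $\omega$ bits across the cut. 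This buys you two things the paper's sketch glosses over: (i) you explicitly neutralize the write-only output tape by showing the prefix output $O_1(x)$ is confined to an $O(1)$-size set, so essentially all distinguishing information must reside on the work tape --- the paper's ``same state hence same output'' step silently ignores output already emitted during the prefix, which is exactly where all of its distinguishing information sits; and (ii) you avoid the paper's unsubstantiated claim that $2^\omega$ activity patterns yield $2^\omega$ distinct single canonical paths (a single path through the paper's layered construction carries only $O(m \log \omega)$ bits of information, so that claim would require $m = \Omega(\omega/\log\omega)$ layers and a more elaborate continuation structure than the sketch provides). The only loose end in your write-up is the interaction of the common-prefix argument with the proposition's ``output in any order'' allowance; restricting attention to inputs with, say, $x_1 = 0$ and $x_2 = 1$ pins $O_1(x)$ to prefixes of a single fixed string independent of the remaining bits and closes this cleanly without affecting the $2^{\Omega(\omega)}$ count.
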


\begin{proof}
We sketch an information-theoretic construction. For a given $\omega$, consider a layered DAG with layers $L_0,L_1,\dots,L_m$, where each layer $L_k$ consists of $\omega$ vertices $\{v_{k,1},\dots,v_{k,\omega}\}$. Between $L_0$ and $L_1$, we embed a gadget that realizes one of $2^\omega$ possible patterns of which vertices in $L_0$ are ``active'' in the sense that they lie on some optimal path; for instance, by assigning large negative weights to edges leaving inactive vertices and non-negative weights to edges leaving active ones. Subsequent layers are arranged so that for each active vertex in $L_0$ there is a unique optimal continuation to the sink, and for each inactive vertex any path is strictly suboptimal.

The resulting DAG can be topologically ordered layer by layer, and any cut between $L_0$ and $L_1$ intersects exactly $\omega$ vertices, so the frontier width is $\Theta(\omega)$. By varying the pattern of active vertices in $L_0$, we obtain $2^\omega$ distinct DP instances, all of frontier width at most $\omega$, that induce $2^\omega$ distinct canonical witness paths.

Any single-pass algorithm $\mathcal{A}$ that reconstructs the optimal path must, at some point during the pass, be in distinct internal states for each of these $2^\omega$ possibilities; otherwise two different instances would induce the same internal state and hence the same output. Thus the number of distinct internal states of $\mathcal{A}$ must be at least $2^\omega$, implying a work-tape size of at least
\[
  \log_2(2^\omega) \;=\; \omega
\]
bits. This proves the claimed lower bound.
\end{proof}

Proposition~\ref{prop:omega-lb} shows that in any forward-only, single-pass model, the $\omega$ dependency in Theorem~\ref{thm:universal-space} is qualitatively optimal: even when re-computation is free and the algorithm does not need to preserve the DP table, it must still reserve $\Omega(\omega)$ bits of state to disambiguate different canonical witness paths.

Our offline random-access algorithm also uses $O(\omega)$ space in words (plus lower-order $\polylog T$ terms), which is $O(\omega \log T)$ bits. Thus, as far as the dependence on frontier width is concerned, the Universal Hirschberg bound cannot be improved by more than a logarithmic factor in general.

The remaining question is whether the additive $\polylog T$ term can be reduced further (e.g.\ to $O(\log T)$ or even $O(1)$), or whether some form of logarithmic overhead is inherent in traceback once indices and values must be manipulated explicitly.

\subsection{$\sqrt{T}$-type terms and streaming models}

Our main upper bound
\[
  S(T,\omega) \;=\; O\bigl( \omega \log T + \polylog T \bigr) \quad \text{bits}
\]
holds in the offline random-access model: we can repeatedly ``restart'' forward DPs from arbitrary vertices and time intervals at no additional space cost, and the only recursion overhead is the stack of size $O(\polylog T)$.

In contrast, in a strictly \emph{streaming} model, the DAG $G$ is presented as a single topologically ordered stream of vertices and edges, and the algorithm is constrained to a single pass without random access. In this model one cannot restart a DP pass at arbitrary earlier positions; to simulate such restarts, one must either buffer parts of the stream or maintain checkpointed summaries, reintroducing a dependence on $T$ reminiscent of time--space tradeoffs for Turing machines.

Motivated by this analogy, we formulate the following conjecture.

\begin{conjecture}[Streaming traceback barrier]
\label{conj:streaming}
There exist absolute constants $c, c' > 0$ such that the following holds. For every sufficiently large $T$, there is a family of DP DAGs on $T$ vertices with frontier width
\[
  \omega(T) \;=\; O\bigl((\log T)^{c'}\bigr)
\]
for which any single-pass streaming algorithm that outputs an optimal witness path with success probability at least $2/3$ (over its internal randomness) must use space at least $c \sqrt{T}$ bits.
\end{conjecture}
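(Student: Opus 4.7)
Since Conjecture~\ref{conj:streaming} is explicitly a conjecture and, to my knowledge, does not follow from any off-the-shelf streaming lower bound, I outline a plausible attack while being candid about where the argument runs into open problems. The target is a reduction from a single-pass randomized streaming problem whose hardness is $\Omega(\sqrt{T})$: most naturally, a version of \emph{permutation composition} (or tree evaluation) in which $n = \lceil \sqrt{T} \rceil$ short-range functions $f_1, \dots, f_n : [n] \to [n]$ are streamed in order and the algorithm must output $f_n \circ \cdots \circ f_1(1)$. The composition problem has total input size $\Theta(T)$, and a suitable augmented-indexing-style single-pass lower bound on its randomized streaming complexity would suffice.

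The DP DAG family would be layered, with one layer per function $f_\ell$. The technically delicate point is realizing each $f_\ell$ using only $\polylog T$ active DP states at every cut: a naive ``one-hot wire'' encoding with $n$ parallel vertices per layer has frontier width $\Theta(n) = \Theta(\sqrt{T})$, which already violates the hypothesis of the conjecture. Achieving polylog width seems to require encoding the active pointer as an integer-valued DP state carried by a \emph{single} vertex, and realizing $f_\ell$ as a small dispatch sub-DAG --- for example, a recursive binary-merge tree that ``collapses'' the input layer into an $O(\log n)$-width summary, followed by a decision-tree lookup for $f_\ell$ that reads bits of the pointer one at a time. Whether such a construction can be made both (a) width-polylog under the structural definition of Section~\ref{sec:model} (which counts all vertices with outgoing edges to the future, not just vertices with non-bottom values) and (b) non-trivial enough to encode an arbitrary $f_\ell$ is the first substantive technical question the proof has to settle.

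Assuming a valid construction, the reduction itself is direct: any streaming traceback algorithm of space $S$ solves the composition problem in space $S + O(\log T)$ by reading $f_n \circ \cdots \circ f_1(1)$ off the last vertex of the output witness. Closing the loop then requires a randomized single-pass streaming lower bound of $\Omega(\sqrt{T})$ for depth-$\sqrt{T}$ composition. The standard pointer-chasing communication lower bounds of the form $\Omega(n / k^{O(1)})$ for $(k-1)$-round protocols degenerate in the $k \approx n$ regime we need, and translating them into a genuine single-pass streaming bound appears to require either a direct-sum reduction from augmented indexing or a new lower-bound technique for ``witness-production'' problems, in which the algorithm must commit to early segments of the output before the whole stream has been read. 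In my assessment, this gap --- together with the width-polylog encoding problem above --- is the main reason Conjecture~\ref{conj:streaming} is posed as open rather than as a theorem; either ingredient alone may already require substantial new streaming lower-bound machinery.
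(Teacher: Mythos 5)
This statement is posed in the paper as an open conjecture: the paper offers no proof of it, only motivation by analogy with $\sqrt{t}$-space simulations of time-$t$ Turing machines, where the cost of replaying earlier portions of a one-pass input forces a $\sqrt{T}$-type checkpointing overhead. So there is no proof of record to compare against, and your proposal is correctly framed as a research sketch rather than a proof. You identify the two genuine obstacles --- building a polylog-width DP DAG family that encodes a hard streaming problem, and obtaining an $\Omega(\sqrt{T})$ single-pass randomized lower bound for that problem --- and these are indeed the reasons the statement is a conjecture.

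One concrete flaw in the sketch as written, beyond the gaps you already flag: the base problem you propose is not hard in the order you propose it. If $f_1,\dots,f_n$ are streamed in the same order in which they are applied, a single-pass algorithm computes $f_n\circ\cdots\circ f_1(1)$ with $O(\log n)$ bits by maintaining the running pointer $p_\ell = f_\ell(p_{\ell-1})$; pointer-chasing hardness lives in the regime where the functions arrive in the \emph{reverse} of their application order (or where the number of passes is smaller than the chain length). This clashes directly with the conjecture's model, which streams the DAG in increasing topological order: any ``forward'' pointer chase embedded in such a DAG is traceable online, and the witness-path vertices at layer $\ell$ are computable from the prefix of the stream. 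A successful construction must instead make membership of \emph{early} vertices in the canonical witness path depend on \emph{late} stream content --- exactly the mechanism behind the paper's $\Omega(\omega)$ bound in Proposition~\ref{prop:omega-lb} --- and then amplify the amount of prefix information that must be retained to $\sqrt{T}$ bits while keeping every frontier of size $\polylog T$. You would also need to handle the conjecture's allowance that the path may be output in any order, which rules out arguments that merely show the source-side segment cannot be emitted early. None of this refutes your plan, but the reduction target needs to be re-engineered around the topological-order constraint before the second obstacle (the streaming lower bound itself) even comes into play.
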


The conjecture is inspired by the ``progress ledger'' and checkpointing constructions used for deterministic multitape Turing machines, where one shows inclusions of the form
\[
  \TIME[t] \;\subseteq\; \SPACE\bigl( \widetilde{O}(\sqrt{t}) \bigr)
\]
by organizing the computation into blocks of size $b$, storing $O(t/b)$ block summaries, and balancing $b$ against $t/b$. In that setting, the $\sqrt{t}$ term arises from the inability to jump back to earlier tape cells without replaying the stream.

Our offline algorithm exploits precisely the opposite situation: re-computation is ``cheap'' in space because random access lets us restart a forward DP from any desired vertex and time index. As a result, the recursion overhead collapses from $O(\sqrt{T})$ in the Turing-machine model to $O(\polylog T)$ here. If Conjecture~\ref{conj:streaming} holds, it would formalize the intuition that $\sqrt{T}$-type terms are not intrinsic to traceback itself, but rather to streaming-like constraints on how the DP DAG is accessed, and that substantial additional space may be needed in streaming even when frontier width is polylogarithmic.

\subsection{Thin graphs and absence of universal $T$-barriers}

It is equally important to understand when large values of $T$ \emph{do not} force large traceback space. Lower bounds such as $\Omega(\sqrt{T})$ or $\Omega(\omega)$ are statements about \emph{graph families}, not bare functions of $T$.

Consider the ``thin'' DAG that is just a directed path:
\[
  V = \{1,\dots,T\}, \qquad E = \{(i,i+1) : 1 \le i < T\}.
\]
Here the frontier width is $\omega = 1$, independently of $T$.

\begin{itemize}
  \item The unique witness path is the entire chain $1 \to 2 \to \dots \to T$, regardless of the specific max-type weights on the edges (assuming no edges are deleted and ties are broken deterministically).
  \item A trivial algorithm can output this path using $O(\log T)$ bits of space to store a counter that walks along the chain; indeed, $O(\log T)$ bits are necessary just to index the vertices in binary, so this is optimal up to constant factors.
\end{itemize}

Thus no lower bound of the form $\Omega(T^\alpha)$ for any fixed $\alpha > 0$ can hold uniformly over all DAGs of size $T$. Structural parameters such as frontier width, pathwidth, or treewidth are indispensable to capture the true space complexity of traceback.

Our upper bound
\[
  S(T,\omega) \;=\; O\bigl( \omega \log T + \polylog T \bigr) \quad \text{bits}
\]
respects this structural picture:
\begin{itemize}
  \item For lines and more generally constant-width chains, $\omega = O(1)$ and $S = O(\polylog T)$.
  \item For $N \times N$ grid-like families, $T = \Theta(N^2)$ and $\omega = \Theta(N) = \Theta(\sqrt{T})$, so $S = \Theta(\sqrt{T})$ up to lower-order polylogarithmic factors.
\end{itemize}
This matches the intuition that traceback space depends primarily on the width of the DP frontiers, not on the sheer number of states, except for mild logarithmic indexing overhead.

Overall, Theorem~\ref{thm:universal-space}, Proposition~\ref{prop:omega-lb}, and the examples in Section~\ref{sec:applications} suggest the following picture: in the offline random-access model, the right first-order parameter governing traceback space is frontier width, and the Universal Hirschberg algorithm is essentially optimal with respect to this parameter. The precise role of the $\polylog T$ overhead, and the extent to which it can be reduced or shown necessary under natural assumptions, remains an open direction.

\section{Related Work}
\label{sec:related}

We briefly survey connections to prior work on space-efficient dynamic programming, checkpointing and adjoint computation, and time--space tradeoffs in complexity theory and graph algorithms.

\paragraph{Space-efficient sequence DP and Hirschberg-type schemes.}
Hirschberg's algorithm~\cite{Hirschberg1975} is the classical archetype for space-efficient sequence alignment and LCS. It shows that, on an $m \times n$ edit graph, one can reconstruct an optimal alignment in $O(m+n)$ words of space instead of $O(mn)$ by recursively bisecting along the longer dimension and combining forward and backward DP values at a midpoint column. Numerous refinements and variants exist in the string algorithms and computational biology literature, including Myers' $O(ND)$ difference algorithm and banded alignment schemes that trade time and bandwidth for memory in practical aligners.

These algorithms, however, are typically tightly coupled to the 2D geometry and symmetry of the grid and its reverse: both the forward and backward DPs live on the same state space and have the same width. Our contribution can be viewed as lifting the underlying control-flow pattern---balanced bisection plus meet-in-the-middle valuation---from this specific geometry to arbitrary time-ordered DAGs with bounded frontier width, and showing that backward traversal can be replaced by forward-only re-computation without asymptotic space loss in the offline random-access model. In particular, the max-type semiring and deterministic witness structure are sufficient to recover a canonical optimal path in $O(\omega \log T + \polylog T)$ bits of space, even when the reverse graph has much larger frontier width.

\paragraph{Adjoint computation, automatic differentiation, and checkpointing.}
In reverse-mode automatic differentiation (AD), gradients are propagated backward through a computation graph whose forward execution may involve millions (or billions) of intermediate states. Storing all intermediate values yields optimal time but maximal memory usage; checkpointing schemes such as Griewank and Walther's \textsc{Revolve} algorithm~\cite{GriewankWalther2000} carefully trade recomputation for space by storing a small number of snapshots and replaying segments between them. These methods are analyzed in terms of schedules that minimize either the number of recomputations or the peak memory, subject to a given forward execution trace.

Our height-compressed recursion can be interpreted as a particular checkpointing strategy for witness reconstruction: we store only a logarithmic number of recursion frames plus a single frontier buffer of size $O(\omega)$, and we willingly pay polynomial re-computation costs. Conceptually, the recursion tree plays the role of an explicit checkpoint schedule adapted to the topological order. The emphasis is different from most AD work---we prioritize \emph{space} over recomputation counts, and we restrict to max-type recurrences with deterministic witnesses rather than general differentiable programs---but the structural parallels suggest that ideas from our setting may be transferrable to adjoint computation on structured DAGs, especially when memory is the primary bottleneck and random access to the computation graph is available.

\paragraph{Time--space tradeoffs and pebbling games.}
The study of time--space tradeoffs for Turing machines and circuits has a rich history. The pebbling game on DAGs (e.g.\ Paul, Pippenger, Szemerédi, and Trotter~\cite{PaulPippengerSzemerediTrotter1983}) provides a combinatorial abstraction in which one places and removes pebbles on vertices under rules that mimic space-bounded computation. Pebbling numbers and reversible pebbling sequences translate into lower and upper bounds on time and space for evaluating the DAG, and have been used to show that certain computations require nontrivial time--space products.

Our construction effectively gives a concrete pebbling strategy for the \emph{traceback} problem on DP DAGs in the random-access model: we use $O(\omega)$ ``value pebbles'' to represent the current frontier and an additional $O(\polylog T)$ control pebbles to navigate the recursion tree. The ability to jump to arbitrary vertices and recompute local DPs means that we never need to maintain more than one frontier at a time. In contrast, classic results on multitape Turing machines (and related streaming models) show that, without random access, simulating a length-$t$ computation may require $\Theta(\sqrt{t})$ space, reflecting the cost of replaying long segments of the input tape. The contrast highlights how access patterns and model constraints (streaming vs.\ random access, single-pass vs.\ multi-pass) fundamentally alter the achievable time--space tradeoffs even on the same underlying DAG.

\paragraph{Width parameters: treewidth, pathwidth, and DP formulations.}
In parameterized complexity, treewidth and pathwidth play a central role in designing fixed-parameter tractable algorithms for NP-hard problems. Bodlaender's survey~\cite{Bodlaender1998} and subsequent work show that many problems become tractable on graphs of bounded treewidth via DP over tree or path decompositions, and that the \emph{width} of the decomposition directly governs the memory required to store DP tables.

Our frontier width $\omega$ is essentially a pathwidth-like parameter aligned with a particular topological order. Corollary~\ref{cor:pathwidth} makes this connection explicit: once a problem admits a bounded-width DP formulation in our sense, the additional space needed to reconstruct a witness beyond computing the DP values is only polylogarithmic in $T$. In that sense, the ``price of reconstruction'' is negligible for all width-bounded formulations, complementing the existing literature that focuses primarily on decision versions. An interesting direction, discussed in Section~\ref{sec:open}, is to understand how far this picture extends to other structural width measures (e.g.\ treewidth, clique-width, or DAG-width) and to what extent frontier width can be minimized by choosing an appropriate topological order.

\section{Open Problems}
\label{sec:open}

The Universal Hirschberg framework resolves the deterministic offline case for max-type DP recurrences on DAGs with bounded frontier width, up to polylogarithmic slack in $T$, in the random-access model of Section~\ref{sec:model}. Several directions remain open for sharpening, extending, or qualitatively changing the picture.

\begin{enumerate}
  \item \textbf{Beyond max-type semirings and into counting.}
  Our algorithm heavily exploits max-type structure and deterministic tie-breaking to obtain a \emph{single} canonical witness. For counting DPs over $(\mathbb{N},+,\times)$, we often want either the number of optimal witnesses or a random witness drawn from an appropriate distribution (e.g.\ the Boltzmann distribution). One natural question is:

  \begin{quote}
    Can we design $O(\omega \log T + \polylog T)$-space algorithms that (approximately) count witnesses, or that sample witnesses from a specified distribution, for counting DPs on width-bounded DAGs?
  \end{quote}

  Even for grid-based alignment problems, space-efficient sampling algorithms that match the evaluation-space bounds are not known in general; understanding the right structural conditions for counting and sampling in near-frontier space remains largely open.

  \item \textbf{Refined structural parameters and tighter overhead.}
  Our $\polylog T$ overhead comes from a generic balanced binary recursion and from storing DP values and indices in a standard bit model. For more structured graph families (e.g.\ planar DAGs, series--parallel graphs, leveled networks), it is plausible that one can exploit specialized decompositions or canonical embeddings to reduce recursion depth or share information across levels. This raises:

  \begin{quote}
    Are there natural graph classes where traceback can be done in space $O(\omega \log T + \log T)$, or even $O(\omega \log T)$, in the worst case?
  \end{quote}

  Conversely, can one prove unconditional lower bounds showing that an additive $\Omega(\log T)$ term is necessary for certain families, even when $\omega = O(1)$? For instance, can one exhibit bounded-width DAGs on which any algorithm must store $\Omega(\log T)$ bits of index information to output a witness, beyond the trivial $\Omega(\log T)$ needed to address vertices in binary?

  \item \textbf{Streaming vs.\ offline: closing the $\sqrt{T}$ gap.}
  Conjecture~\ref{conj:streaming} posits an $\Omega(\sqrt{T})$ barrier for single-pass streaming algorithms on certain DAG families whose frontier width is only polylogarithmic in $T$. At the same time, our offline algorithm shows that with random access the space can be reduced to $O(\omega(T)\log T + \polylog T)$, which is $\widetilde{O}(1)$ for those families and $\widetilde{O}(\sqrt{T})$ for grid-like families with $\omega = \Theta(\sqrt{T})$. Two concrete questions are:

  \begin{itemize}
    \item Can one prove nontrivial lower bounds for randomized streaming traceback algorithms on grid-like DAGs, beyond the trivial $\Omega(\omega)$, and relate them to classical streaming lower bounds (e.g.\ via reductions from communication problems)?
    \item Are there intermediate models (e.g.\ few-pass streaming, semi-streaming with $O(\sqrt{T})$ memory, or streaming with limited random access) where one can match or beat the offline space bound for natural DP families?
  \end{itemize}

  Progress here would clarify to what extent $\sqrt{T}$-type behavior is a model artifact versus a fundamental barrier for traceback, and how sharply one can separate offline and streaming access patterns.

  \item \textbf{Parallelization and work--depth tradeoffs.}
  Our algorithm is deliberately sequential and re-computation-heavy: at each recursion level we process one subproblem at a time with a single frontier buffer. However, the decomposition structure suggests inherent parallelism: subproblems corresponding to disjoint time intervals and disjoint path segments could be explored concurrently. This leads to the question:

  \begin{quote}
    Is there a parallel version of the Universal Hirschberg algorithm that achieves polylogarithmic depth (NC-like) while keeping the total space close to $O(\omega \cdot \polylog T)$ (bits) and the total work near $O(T \cdot \polylog T)$?
  \end{quote}

  Answering this would connect traceback to parallel DP and circuit evaluation, and could inform the design of low-memory, high-throughput alignment pipelines and other large-scale DP systems. One challenge is to balance the duplication of recomputation across processors against the aggregate space used by concurrently active frontier buffers.

  \item \textbf{Time--space product lower bounds for traceback.}
  Proposition~\ref{prop:omega-lb} gives a pure space lower bound in a single-pass model; our upper bound gives a pure space upper bound in an offline model. A more refined goal is to understand the joint tradeoff:

  \begin{quote}
    For a given graph family (e.g.\ $N \times N$ grids, banded grids, or bounded-pathwidth DAGs), what are the best possible tradeoffs between time $T_{\mathrm{alg}}$ and space $S_{\mathrm{alg}}$ for traceback? Can one prove lower bounds of the form
    \[
      T_{\mathrm{alg}} \cdot S_{\mathrm{alg}} \;\ge\; f(N)
    \]
    for some $f$ tied to the structure?
  \end{quote}

  Even modest such bounds (e.g.\ ruling out $T_{\mathrm{alg}} = \poly(N)$ and $S_{\mathrm{alg}} = o(\omega)$ simultaneously for certain natural DP formulations) would deepen the analogy between traceback and classical time--space tradeoffs for evaluation and simulation tasks, and might require new pebbling-style or communication-complexity arguments tailored to witness reconstruction rather than value computation.
\end{enumerate}

\medskip
\noindent\textbf{Acknowledgements.}
We gratefully acknowledge the many contributors who have catalyzed breakthroughs in time-space tradeoffs and efficient simulation during the past two years; this manuscript builds directly on the results of others and would not exist without their insight. We further disclose that the exploration, analysis, drafting, and revisions of this manuscript were conducted with the assistance of large language model technology; the authors bear sole responsibility for any errors in technical claims, constructions, and proofs. There are no conflicts of interest to disclose and no external funding support to declare for this work.

\newpage

\appendix

\section{Detailed Algorithmic Procedures}
\label{appendix:pseudocode}

This appendix provides pseudocode for the Universal Hirschberg algorithm described in the main text. We strive to present each major procedure in sufficient detail for implementation, using notation consistent with Sections~\ref{sec:model}--\ref{sec:universal}.

\medskip

\noindent\textbf{Algorithmic components.}
The appendix is organized around the following procedures:
\begin{enumerate}
  \item \textbf{Algorithm~\ref{alg:main}}: Universal Hirschberg main driver.
  \item \textbf{Algorithm~\ref{alg:global-forward}}: Global forward DP to compute $x_t$ using a rolling frontier.
  \item \textbf{Algorithm~\ref{alg:recursive-traceback}}: Recursive traceback on intervals $I = [i,j]$.
  \item \textbf{Algorithm~\ref{alg:middle-frontier}}: Computation of the middle frontier $F^{\mathrm{mid}}_I$.
  \item \textbf{Algorithm~\ref{alg:prefix-values}}: Forward-only computation of prefix values $f(v)$ on $[i,m(I)]$.
  \item \textbf{Algorithm~\ref{alg:suffix-values}}: Forward-only computation of suffix values $g(v)$ on $I$.
  \item \textbf{Algorithm~\ref{alg:select-midpoint}}: Canonical midpoint selection from $F^{\mathrm{mid}}_I$.
  \item \textbf{Algorithm~\ref{alg:base-case}}: Base-case DP and direct traceback on short intervals.
\end{enumerate}

Throughout:
\begin{itemize}
  \item $G = (V, E, S_{\mathrm{src}}, T_{\mathrm{sink}}, \tau)$ denotes the input DP DAG with $|V| = T$ vertices in topological order.
  \item $\oplus$ and $\otimes$ denote the semiring operations (supremum and multiplication).
  \item $\omega$ denotes the frontier width $\omega(G)$.
  \item For an interval $I = [i,j]$, we write $m(I) = \lfloor (i+j)/2 \rfloor$ for its midpoint.
  \item $\prec$ denotes the fixed total order on edges $E$ used for deterministic tie-breaking.
  \item Space complexities are generally stated in machine words (each word is $O(\log T)$ bits); time complexities count primitive operations.
\end{itemize}

\noindent
In recursive subproblems, we work with a \emph{local} DP induced by a boundary. Each traceback subproblem carries:
\begin{itemize}
  \item a boundary set $B_{\mathrm{left}}$ of vertices that may serve as sources for this subproblem, and
  \item a map $\beta_{\mathrm{left}} : B_{\mathrm{left}} \to \mathcal{S}$ giving the DP value at each boundary vertex for this \emph{subproblem} (with respect to the same recurrence as globally, but restricted to paths starting in $B_{\mathrm{left}}$ and staying inside the current interval thereafter).
\end{itemize}
All local forward DPs below are understood to use these boundary values as their initial conditions.

We assume standard data structures (dictionaries, sets, lists) with $O(1)$ access in the RAM model, and that the input graph $G$ supports:
\begin{itemize}
  \item $\mathrm{predecessors}(v)$ returning $\{u : (u,v) \in E\}$ in time $O(\deg^{-}(v))$,
  \item $\mathrm{successors}(v)$ returning $\{w : (v,w) \in E\}$ in time $O(\deg^{+}(v))$,
  \item edge weights $w_{u,v}$ accessible in $O(1)$ time.
\end{itemize}

\begin{algorithm}
\caption{Universal Hirschberg: Main Algorithm}
\label{alg:main}
\begin{algorithmic}[1]
\Require DP DAG $G = (V, E, S_{\mathrm{src}}, T_{\mathrm{sink}}, \tau)$, target sink $t \in T_{\mathrm{sink}}$
\Ensure Canonical optimal witness path $P$ from a source to $t$
\State
\State \Comment{Phase 1: Compute optimal global value at sink $t$}
\State $x_t \gets \Call{GlobalForwardDP}{G, t}$
\State
\State \Comment{Phase 2: Set up root subproblem}
\State $I \gets [1, T]$
\State $B_{\mathrm{left}} \gets S_{\mathrm{src}}$
\State $\beta_{\mathrm{left}} \gets \emptyset$
\For{each $u \in B_{\mathrm{left}}$}
  \State $\beta_{\mathrm{left}}[u] \gets a_u$ \Comment{Source initialization values}
\EndFor
\State $s \gets t$
\State $\mathrm{val} \gets x_t$
\State
\State \Comment{Recursive traceback}
\State $P \gets \Call{RecursiveTraceback}{G, I, B_{\mathrm{left}}, \beta_{\mathrm{left}}, s, \mathrm{val}}$
\State \Return $P$
\State
\State \textbf{Space:} $O(\omega + \polylog T)$ words \quad \textbf{Time:} $\poly(T)$
\end{algorithmic}
\end{algorithm}

\begin{algorithm}
\caption{Global Forward DP}
\label{alg:global-forward}
\begin{algorithmic}[1]
\Require DP DAG $G$, target vertex $t$
\Ensure Optimal global value $x_t$
\State
\State \Comment{Initialize frontier buffer (dictionary: vertex $\to$ DP value)}
\State $\mathit{frontier} \gets \emptyset$
\State
\State \Comment{Process vertices in topological order}
\For{$v \gets 1$ \textbf{to} $T$}
    \If{$v \in S_{\mathrm{src}}$}
        \State \Comment{Source initialization}
        \State $x_v \gets a_v$
    \Else
        \State \Comment{Compute via global recurrence}
        \State $x_v \gets \text{bottom element for } \oplus$ \Comment{e.g., $-\infty$ for max}
        \For{each $(u, v) \in E$ with $u \in \mathit{frontier}$}
            \State $\mathit{candidate} \gets \mathit{frontier}[u] \otimes w_{u,v}$
            \State $x_v \gets x_v \oplus \mathit{candidate}$
        \EndFor
    \EndIf
    \State
    \State \Comment{Store in frontier}
    \State $\mathit{frontier}[v] \gets x_v$
    \State
    \State \Comment{Garbage collection: remove vertices with no future successors}
    \For{each $u \in \mathit{frontier}$}
        \State $\mathit{hasFuture} \gets \exists w > v : (u, w) \in E$
        \If{$\neg \mathit{hasFuture}$}
            \State remove $u$ from $\mathit{frontier}$
        \EndIf
    \EndFor
    \State
    \State \Comment{Early termination if target reached}
    \If{$v = t$}
        \State \Return $x_t$
    \EndIf
\EndFor
\State
\State \Return $\mathit{frontier}[t]$
\State
\State \textbf{Space:} $O(\omega)$ words \quad \textbf{Time:} $O(T \cdot \degmax)$
\end{algorithmic}
\end{algorithm}

\begin{algorithm}
\caption{Recursive Traceback}
\label{alg:recursive-traceback}
\begin{algorithmic}[1]
\Require DP DAG $G$, interval $I = [i,j]$, boundary set $B_{\mathrm{left}} \subseteq \{1,\dots,i-1\} \cup S_{\mathrm{src}}$, boundary value map $\beta_{\mathrm{left}} : B_{\mathrm{left}} \to \mathcal{S}$, target $s \in I$, value $\mathrm{val}$
\Ensure Path $P$ from $B_{\mathrm{left}}$ to $s$ with internal vertices in $I$ under the local DP induced by $(B_{\mathrm{left}}, \beta_{\mathrm{left}})$
\State
\State $\ell \gets j - i + 1$ \Comment{Interval length}
\State
\State \Comment{Base case: interval is small}
\If{$\ell \le B_{\mathrm{base}}$}
    \State \Return \Call{BaseCase}{$G, I, B_{\mathrm{left}}, \beta_{\mathrm{left}}, s$}
\EndIf
\State
\State \Comment{Recursive case: divide and conquer}
\State $m \gets \lfloor (i + j) / 2 \rfloor$ \Comment{Midpoint}
\State
\State \Comment{If the sink lies on or before the midpoint, recurse purely on the left}
\If{$s \le m$}
    \State \Return \Call{RecursiveTraceback}{G, $[i, m]$, $B_{\mathrm{left}}$, $\beta_{\mathrm{left}}$, $s$, $\mathrm{val}$}
\EndIf
\State
\State \Comment{Otherwise $s > m$: perform midpoint search via middle frontier}
\State $F^{\mathrm{mid}}_I \gets \Call{ComputeMiddleFrontier}{G, I, m}$
\State
\State \Comment{Step 1: Compute prefix values $f(v)$ for all $v \in F^{\mathrm{mid}}_I$}
\State $f \gets \Call{ComputePrefixValues}{G, [i, m], B_{\mathrm{left}}, \beta_{\mathrm{left}}, F^{\mathrm{mid}}_I}$
\State
\State \Comment{Step 2: Compute suffix values $g(v)$ for all $v \in F^{\mathrm{mid}}_I$}
\State $g \gets \Call{ComputeSuffixValues}{G, I, F^{\mathrm{mid}}_I, s}$
\State
\State \Comment{Step 3: Select canonical midpoint}
\State $v^\star \gets \Call{SelectMidpoint}{F^{\mathrm{mid}}_I, f, g}$
\State
\State \Comment{Step 4: Recurse on left subinterval}
\State $P_{\mathrm{left}} \gets \Call{RecursiveTraceback}{G, [i, m], B_{\mathrm{left}}, \beta_{\mathrm{left}}, v^\star, f[v^\star]}$
\State
\State \Comment{Step 5: Recurse on right subinterval with $v^\star$ as fresh boundary source}
\State $B_{\mathrm{right}} \gets \{v^\star\}$
\State $\beta_{\mathrm{right}} \gets \emptyset$
\State $\beta_{\mathrm{right}}[v^\star] \gets \mathbf{1}_\otimes$ \Comment{Multiplicative identity of the semiring}
\State $P_{\mathrm{right}} \gets \Call{RecursiveTraceback}{G, [m+1, j], B_{\mathrm{right}}, \beta_{\mathrm{right}}, s, g[v^\star]}$
\State
\State \Comment{Step 6: Concatenate path segments}
\If{$|P_{\mathrm{right}}| > 0$ \textbf{and} $P_{\mathrm{right}}[0] = v^\star$}
    \State $P \gets P_{\mathrm{left}} + P_{\mathrm{right}}[1:]$ \Comment{Remove duplicate $v^\star$}
\Else
    \State $P \gets P_{\mathrm{left}} + P_{\mathrm{right}}$
\EndIf
\State
\State \Return $P$
\State
\State \textbf{Space:} $O(\omega)$ words per call + $O(\polylog T)$ stack
\State \textbf{Time:} $O(\omega \cdot \ell \cdot \degmax)$
\end{algorithmic}
\end{algorithm}

\begin{algorithm}
\caption{Compute Middle Frontier}
\label{alg:middle-frontier}
\begin{algorithmic}[1]
\Require DP DAG $G$, interval $I = [i,j]$, midpoint $m$
\Ensure Middle frontier $F^{\mathrm{mid}}_I = \{v \in V_I : v \le m \text{ and } \exists w > m, (v,w) \in E\}$
\State
\State $F^{\mathrm{mid}}_I \gets \emptyset$
\State
\State \Comment{Scan vertices up to midpoint}
\For{$v \gets i$ \textbf{to} $\min\{m, j\}$}
    \State $\mathit{hasCrossingEdge} \gets \textsc{false}$
    \State
    \State \Comment{Check if $v$ has successor beyond midpoint}
    \For{each $(v, w) \in E$}
        \If{$w > m$}
            \State $\mathit{hasCrossingEdge} \gets \textsc{true}$
            \State \textbf{break}
        \EndIf
    \EndFor
    \State
    \If{$\mathit{hasCrossingEdge}$}
        \State $F^{\mathrm{mid}}_I \gets F^{\mathrm{mid}}_I \cup \{v\}$
    \EndIf
\EndFor
\State
\State \Return $F^{\mathrm{mid}}_I$
\State
\State \textbf{Space:} $O(\omega)$ words \quad \textbf{Time:} $O((m-i+1) \cdot \degmax)$
\end{algorithmic}
\end{algorithm}

\begin{algorithm}
\caption{Compute Prefix Values}
\label{alg:prefix-values}
\begin{algorithmic}[1]
\Require DP DAG $G$, interval $[i, m]$, boundary set $B_{\mathrm{left}}$, boundary value map $\beta_{\mathrm{left}}$, target set $F^{\mathrm{mid}}_I$
\Ensure Dictionary $f$ mapping $v \in F^{\mathrm{mid}}_I$ to optimal value from $B_{\mathrm{left}}$ to $v$ with internal vertices in $[i,m]$ under the local DP induced by $(B_{\mathrm{left}}, \beta_{\mathrm{left}})$
\State
\State \Comment{Initialize frontier with boundary values}
\State $\mathit{frontier} \gets \emptyset$
\For{each $u \in B_{\mathrm{left}}$}
  \State $\mathit{frontier}[u] \gets \beta_{\mathrm{left}}[u]$
\EndFor
\State $f \gets \emptyset$ \Comment{Output dictionary}
\State
\State \Comment{Forward DP from $i$ to $m$}
\For{$v \gets i$ \textbf{to} $m$}
    \If{$v \in B_{\mathrm{left}}$}
        \State $x_v \gets \mathit{frontier}[v]$ \Comment{Already initialized for this subproblem}
    \Else
        \State \Comment{Compute via recurrence restricted to boundary and previously reached vertices}
        \State $x_v \gets \text{bottom element for } \oplus$
        \For{each $(u, v) \in E$ with $u \in \mathit{frontier}$}
            \State $x_v \gets x_v \oplus (\mathit{frontier}[u] \otimes w_{u,v})$
        \EndFor
        \State $\mathit{frontier}[v] \gets x_v$
    \EndIf
    \State
    \State \Comment{Record value if $v$ is in middle frontier}
    \If{$v \in F^{\mathrm{mid}}_I$}
        \State $f[v] \gets \mathit{frontier}[v]$
    \EndIf
    \State
    \State \Comment{Cleanup: remove vertices with no successors in $(v, m]$}
    \If{$v < m$}
        \For{each $u \in \mathit{frontier}$}
            \State $\mathit{hasFuture} \gets \exists w \in (v, m] : (u, w) \in E$
            \If{$\neg \mathit{hasFuture}$}
                \State remove $u$ from $\mathit{frontier}$
            \EndIf
        \EndFor
    \EndIf
\EndFor
\State
\State \Return $f$
\State
\State \textbf{Space:} $O(\omega)$ words \quad \textbf{Time:} $O((m-i+1) \cdot \degmax)$
\end{algorithmic}
\end{algorithm}

\begin{algorithm}
\caption{Compute Suffix Values (Forward-Only)}
\label{alg:suffix-values}
\begin{algorithmic}[1]
\Require DP DAG $G$, interval $I = [i,j]$, frontier $F^{\mathrm{mid}}_I$, sink $s \in I$ with $s > m(I)$
\Ensure Dictionary $g$ mapping $v \in F^{\mathrm{mid}}_I$ to optimal value from $v$ to $s$ with internal vertices in $I$ under the local DP induced by boundary $\{v\}$
\State
\State $g \gets \emptyset$ \Comment{Output dictionary}
\State
\State \Comment{For each candidate midpoint vertex}
\For{each $v \in F^{\mathrm{mid}}_I$ (in canonical order)}
    \State \Comment{Run forward DP over $(v, s]$ with boundary at $v$}
    \State $\mathit{frontier} \gets \emptyset$
    \State $\mathit{valueAtS} \gets \text{bottom element for } \oplus$
    \State
    \For{$w \gets v+1$ \textbf{to} $s$}
        \State $x_w \gets \text{bottom element for } \oplus$
        \State
        \State \Comment{Compute via recurrence, starting from $v$ and reached vertices in $(v,w)$}
        \For{each $(u, w) \in E$}
            \If{$u = v$}
                \State \Comment{Boundary-to-first step uses multiplicative identity at $v$}
                \State $\mathit{candidate} \gets \mathbf{1}_\otimes \otimes w_{u,w}$ \Comment{Often just $w_{u,w}$}
                \State $x_w \gets x_w \oplus \mathit{candidate}$
            \ElsIf{$u \in \mathit{frontier}$}
                \State $\mathit{candidate} \gets \mathit{frontier}[u] \otimes w_{u,w}$
                \State $x_w \gets x_w \oplus \mathit{candidate}$
            \EndIf
        \EndFor
        \State
        \State $\mathit{frontier}[w] \gets x_w$
        \State
        \If{$w < s$}
            \State \Comment{Cleanup: remove vertices with no successors in $(w, s]$}
            \For{each $u \in \mathit{frontier}$}
                \State $\mathit{hasFuture} \gets \exists z \in (w, s] : (u, z) \in E$
                \If{$\neg \mathit{hasFuture}$}
                    \State remove $u$ from $\mathit{frontier}$
                \EndIf
            \EndFor
        \EndIf
        \State
        \If{$w = s$}
            \State $\mathit{valueAtS} \gets x_w$
        \EndIf
    \EndFor
    \State
    \State $g[v] \gets \mathit{valueAtS}$
\EndFor
\State
\State \Return $g$
\State
\State \textbf{Space:} $O(\omega)$ words (reused)
\State \textbf{Time:} $O(|F^{\mathrm{mid}}_I| \cdot |I| \cdot \degmax) = O(\omega \cdot |I| \cdot \degmax)$
\end{algorithmic}
\end{algorithm}

\begin{algorithm}
\caption{Select Canonical Midpoint}
\label{alg:select-midpoint}
\begin{algorithmic}[1]
\Require Frontier $F^{\mathrm{mid}}_I$, prefix values $f$, suffix values $g$
\Ensure Canonical vertex $v^\star \in F^{\mathrm{mid}}_I$ maximizing $f(v) \otimes g(v)$
\State
\State $\mathit{bestValue} \gets \text{bottom element for } \oplus$
\State $v^\star \gets \textsc{null}$
\State
\State \Comment{Iterate in canonical order for deterministic tie-breaking}
\For{each $v \in F^{\mathrm{mid}}_I$ (in increasing order)}
    \State $\mathit{combinedValue} \gets f[v] \otimes g[v]$
    \State
    \If{$\mathit{combinedValue} \succ \mathit{bestValue}$}
        \State $\mathit{bestValue} \gets \mathit{combinedValue}$
        \State $v^\star \gets v$
    \EndIf
    \State \Comment{Ties broken by iteration order (first occurrence wins)}
\EndFor
\State
\State \textbf{assert} $v^\star \ne \textsc{null}$ \Comment{At least one vertex must maximize}
\State \Return $v^\star$
\State
\State \textbf{Space:} $O(1)$ words \quad \textbf{Time:} $O(|F^{\mathrm{mid}}_I|) = O(\omega)$
\end{algorithmic}
\end{algorithm}

\begin{algorithm}
\caption{Base Case: Direct Traceback}
\label{alg:base-case}
\begin{algorithmic}[1]
\Require DP DAG $G$, small interval $I = [i,j]$ with $|I| \le B_{\mathrm{base}}$, boundary set $B_{\mathrm{left}}$, boundary value map $\beta_{\mathrm{left}}$, target $s \in I$
\Ensure Path $P$ from $B_{\mathrm{left}}$ to $s$ with internal vertices in $I$ under the local DP induced by $(B_{\mathrm{left}}, \beta_{\mathrm{left}})$
\State
\State \Comment{Build local DP table with predecessor pointers}
\State $\mathit{dpTable} \gets \emptyset$ \Comment{Maps $v \mapsto (x_v, \pred(v))$ in this subproblem}
\State
\State \Comment{Initialize boundary values for this subproblem}
\For{each $u \in B_{\mathrm{left}}$}
    \State $\mathit{dpTable}[u] \gets (\beta_{\mathrm{left}}[u], \bot)$ \Comment{No predecessor inside $I$}
\EndFor
\State
\State \Comment{Forward DP with predecessor tracking on $I$}
\For{$v \gets i$ \textbf{to} $j$}
    \If{$v \in B_{\mathrm{left}}$}
        \State \textbf{continue} \Comment{Already initialized}
    \EndIf
    \State
    \State $\mathit{bestValue} \gets \text{bottom element for } \oplus$
    \State $\mathit{bestPred} \gets \bot$
    \State
    \State \Comment{Find best predecessor using recurrence and tie-breaking}
    \For{each $(u, v) \in E$ (in order $\prec$)}
        \If{$u \in \mathit{dpTable}$}
            \State $\mathit{candidateValue} \gets \mathit{dpTable}[u].\mathit{value} \otimes w_{u,v}$
            \State
            \If{$\mathit{candidateValue} \succ \mathit{bestValue}$}
                \State $\mathit{bestValue} \gets \mathit{candidateValue}$
                \State $\mathit{bestPred} \gets u$
            \EndIf
            \State \Comment{First in edge order wins ties}
        \EndIf
    \EndFor
    \State
    \State $\mathit{dpTable}[v] \gets (\mathit{bestValue}, \mathit{bestPred})$
\EndFor
\State
\State \Comment{Backtrack from $s$ to reconstruct path segment}
\State $P \gets$ empty list
\State $\mathit{current} \gets s$
\State
\While{$\mathit{current} \ne \bot$ \textbf{and} $\mathit{current} \notin B_{\mathrm{left}}$}
    \State prepend $\mathit{current}$ to $P$
    \State $\mathit{current} \gets \mathit{dpTable}[\mathit{current}].\mathit{predecessor}$
\EndWhile
\State
\State \Comment{Add boundary vertex if reached}
\If{$\mathit{current} \in B_{\mathrm{left}}$}
    \State prepend $\mathit{current}$ to $P$
\EndIf
\State
\State \Return $P$
\State
\State \textbf{Space:} $O(\omega + B_{\mathrm{base}}) = O(\omega + \polylog T)$ words
\State \textbf{Time:} $O(B_{\mathrm{base}} \cdot \degmax)$
\end{algorithmic}
\end{algorithm}

\medskip

\noindent\textbf{Algorithm correctness and space complexity.}
\begin{theorem}[Correctness and Space Complexity]
Under Assumptions~\ref{ass:locality} and~\ref{ass:max} from the main text, Algorithm~\ref{alg:main} correctly computes a canonical optimal witness path from a source in $S_{\mathrm{src}}$ to the target sink $t$, and uses work-tape space
\[
  \SPACE_{\mathrm{traceback}} \;=\; O(\omega \log T + \polylog T)
\]
cells over a fixed finite alphabet.
\end{theorem}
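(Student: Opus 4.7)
The plan is to argue correctness and the space bound separately, reusing the structural lemmas already proved for the main-text version of the algorithm. Since Algorithm~\ref{alg:main} is a concrete implementation of the recursion analyzed in Sections~\ref{sec:height-compression}--\ref{sec:universal}, the goal is to verify that each subroutine in the appendix realizes the abstract operation on which the earlier proofs rely, and then invoke Theorem~\ref{thm:universal-space}.

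For correctness, I would proceed by induction on the recursion depth, carrying the subproblem invariant of Lemma~\ref{lem:subproblem-invariant} as the inductive hypothesis. The base of the recursion is Algorithm~\ref{alg:base-case}: here I would show that building the local DP table with $(B_{\mathrm{left}},\beta_{\mathrm{left}})$ as boundary and then backtracking the locally stored $\pred$ pointers produces the canonical witness segment, directly from Lemma~\ref{lem:local-consistency} (since the boundary values coincide with the true global values in the root call, and with the decomposed values $f(v^\star), g(v^\star)$ in descendant calls by inductive hypothesis). For the recursive step I would verify: (i) Algorithm~\ref{alg:middle-frontier} returns exactly $F^{\mathrm{mid}}_I$ by definition; (ii) Algorithm~\ref{alg:prefix-values} is an instance of the local forward DP of Lemma~\ref{lem:local-dp} with boundary $B_{\mathrm{left}}$ and target set $F^{\mathrm{mid}}_I$, and hence its output satisfies $f[v] = f(v)$ as in Section~\ref{sec:universal}; (iii) Algorithm~\ref{alg:suffix-values}, repeated once per $v \in F^{\mathrm{mid}}_I$, computes the forward-only suffix values $g(v)$, appealing again to Lemma~\ref{lem:local-dp} with boundary $\{v\}$; (iv) Algorithm~\ref{alg:select-midpoint} selects the $\prec$-canonical maximizer $v^\star$, matching the tie-breaking rule in Proposition~\ref{prop:decomposition}. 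The case split on $s\le m$ versus $s>m$ in Algorithm~\ref{alg:recursive-traceback} then mirrors the branching in the invariance lemma, so the inductive step preserves the invariant for both children. The concatenation step merely removes the duplicated midpoint vertex at the boundary between $P_{\mathrm{left}}$ and $P_{\mathrm{right}}$.

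For the space bound, I would isolate the two contributions exactly as in the proof of Theorem~\ref{thm:universal-space}. First, at any moment only one forward DP (global, prefix, suffix, or base-case) is active; inspecting Algorithms~\ref{alg:global-forward}, \ref{alg:prefix-values}, \ref{alg:suffix-values}, and~\ref{alg:base-case} shows that each maintains at most one frontier dictionary together with $O(1)$ scratch variables, and the garbage-collection steps enforce that the live frontier stays within the global $\Front(\cdot)$ and therefore has size at most $\omega$; this contributes $O(\omega)$ words, i.e.\ $O(\omega \log T)$ cells. Second, the recursion stack of Algorithm~\ref{alg:recursive-traceback} has depth $O(\log T)$ by Lemma~\ref{lem:depth}, and each frame stores only the endpoints $i,j$, the target $s$, the value $\mathrm{val}$, and a handful of control bits, which is $O(\polylog T)$ cells per frame; crucially, $B_{\mathrm{left}}$ and $\beta_{\mathrm{left}}$ are never duplicated on the stack because the right child's boundary is the singleton $\{v^\star\}$ and the left child's boundary is shared with the parent (so the parent frame suffices for recovery on return). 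Summing gives $O(\omega \log T + \polylog T)$ cells, as claimed.

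The only step I expect to require real care is the suffix-value computation in Algorithm~\ref{alg:suffix-values}: we must argue that restricting the forward DP from $v$ to lie inside $I$, and seeding $v$ with the multiplicative identity $\mathbf{1}_\otimes$ rather than with its true global DP value, still produces the quantity $g(v)$ that Proposition~\ref{prop:decomposition} demands. The subtlety is that the inductive invariant only guarantees the existence of an optimal path whose \emph{internal} vertices stay in $I$, not that every candidate suffix path respects $I$; I would reconcile this by noting that Proposition~\ref{prop:decomposition} defines $g(v)$ precisely as the optimum over $I$-confined suffix paths, so the forward DP on $[v,j]$ seeded at $v$ computes exactly this quantity, and consistency of predecessor choices across all such local DPs follows from Lemma~\ref{lem:local-consistency} applied to the subgraph $(U,B) = ([v,j]\cap V_I,\{v\})$. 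Once this is pinned down, everything else is a routine bookkeeping check against the already-established results.
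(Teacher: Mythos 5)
Your proposal is correct and follows essentially the same route as the paper's own proof sketch: correctness via induction on the recursion depth using the subproblem invariant (Lemma~\ref{lem:subproblem-invariant}), the crossing and decomposition results, and local consistency (Lemma~\ref{lem:local-consistency}); space via the same two-part split into a single reused $O(\omega)$-word frontier buffer and an $O(\log T)$-depth stack of $O(\polylog T)$-bit frames with the boundary set never duplicated. Your explicit treatment of the suffix-value subtlety (seeding $v^\star$ with $\mathbf{1}_\otimes$ and confining the DP to $I$) is a welcome elaboration of a point the paper glosses over; the only tiny imprecision is that midpoint ties are broken by vertex index rather than by the edge order $\prec$, which the paper reserves for the base case.
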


\begin{proof}[Proof sketch]
Correctness follows from the invariant maintained throughout the recursion (Lemma~\ref{lem:subproblem-invariant}), which ensures that at each recursive call $(I, B_{\mathrm{left}}, \beta_{\mathrm{left}}, s, \mathrm{val})$:
\begin{enumerate}
  \item There exists at least one optimal witness path $P$ from $B_{\mathrm{left}}$ to $s$ whose internal vertices all lie in $I$ and whose semiring value is $\mathrm{val}$ with respect to the local DP induced by $(B_{\mathrm{left}}, \beta_{\mathrm{left}})$.
  \item The canonical midpoint $v^\star$ selected by Algorithm~\ref{alg:select-midpoint} lies on at least one such optimal path $P^\star$.
\end{enumerate}

The tie-breaking logic is hierarchical:
\begin{itemize}
    \item At internal nodes ($|I| > B_{\mathrm{base}}$), Algorithm~\ref{alg:select-midpoint} resolves ties between candidate vertices in $F^{\mathrm{mid}}_I$ using a deterministic order (e.g., smallest index), ensuring a unique choice of $v^\star$.
    \item At base-case nodes ($|I| \le B_{\mathrm{base}}$), Algorithm~\ref{alg:base-case} resolves ties between incoming edges using the fixed global edge order $\prec$.
\end{itemize}
This combination defines a unique canonical witness path. The properties are preserved inductively by the crossing and decomposition arguments in Lemma~\ref{lem:crossing} and Proposition~\ref{prop:decomposition}, together with the monotonicity of the max-type semiring and consistent use of the same recurrence in all local DPs.

For space complexity:
\begin{itemize}
  \item \textbf{DP workspace:} All forward DP passes (Algorithms~\ref{alg:global-forward}, \ref{alg:prefix-values}, \ref{alg:suffix-values}, \ref{alg:base-case}) maintain a frontier buffer of size at most $\omega$ by Lemma~\ref{lem:local-dp} and the definition of frontier width in Section~\ref{sec:model}. These buffers are reused across passes, contributing $O(\omega)$ words of space, i.e., $O(\omega \log T)$ tape cells.
  
  \item \textbf{Recursion stack:} The recursion tree has depth $O(\log T)$ by Lemma~\ref{lem:depth}. Each stack frame stores interval endpoints, vertex indices, the scalar $\mathrm{val}$, and a succinct encoding of the boundary set $B_{\mathrm{left}}$ and its value map $\beta_{\mathrm{left}}$ (for right children, $B_{\mathrm{left}}$ is always a singleton $\{v^\star\}$ with value $\mathbf{1}_\otimes$; for left children, $B_{\mathrm{left}}$ and $\beta_{\mathrm{left}}$ are inherited from the parent). Each of these pieces requires $O(\polylog T)$ bits under standard encodings for $\mathcal{S}$. Total stack space is $O(\log T \cdot \polylog T) = O(\polylog T)$ bits.
\end{itemize}

Combining these contributions yields the stated bound $O(\omega \log T + \polylog T)$ cells.
\end{proof}

\begin{remark}[Time complexity]
The time complexity is dominated by the suffix value computations (Algorithm~\ref{alg:suffix-values}), which perform $|F^{\mathrm{mid}}_I| \le \omega$ forward DP passes per recursion level when $s > m(I)$. Summing over all $O(\log T)$ levels of the recursion tree and noting that the intervals shrink geometrically, the total time is
\[
  T_{\mathrm{total}} \;=\; O\!\left( \sum_{d=0}^{O(\log T)} \omega \cdot \frac{T}{2^d} \cdot \degmax \right)
  \;=\; O(\omega \cdot T \cdot \degmax \cdot \log T)
  \;=\; T \cdot \poly(\omega, \log T).
\]
In settings where the reversed graph $G^R$ also has frontier width $O(\omega)$, one can replace the per-candidate suffix passes with a single backward DP (as in the classical Hirschberg algorithm), reducing the per-level cost and bringing the total running time closer to $O(T \cdot \degmax)$ while preserving the $O(\omega \log T + \polylog T)$ space bound.
\end{remark}

\medskip

\noindent\textbf{Implementation notes.}
\emph{Deterministic tie-breaking.}
The fixed edge order $\prec$ can be implemented by:
\begin{itemize}
  \item Sorting edges lexicographically: $(u_1, v_1) \prec (u_2, v_2)$ if $u_1 < u_2$, or $u_1 = $ $u_2$ and $v_1 < v_2$, or
  \item Assigning each edge a unique timestamp during graph construction.
\end{itemize}
In Algorithm~\ref{alg:base-case}, iterating over edges in order $\prec$ ensures that the first maximizing edge is selected.

\emph{Base case threshold.}
A practical choice is $B_{\mathrm{base}} = \lceil (\log_2 T)^2 \rceil$. Smaller values increase recursion depth slightly; larger values increase the base-case space usage.

\emph{Frontier management.}
The cleanup steps in Algorithms~\ref{alg:global-forward}, \ref{alg:prefix-values}, and \ref{alg:suffix-values} are essential for maintaining the $O(\omega)$ space bound. In practice, checking whether a vertex $u$ has successors beyond the current position can be done by:
\begin{enumerate}
  \item Scanning the adjacency list $\mathrm{successors}(u)$, or
  \item Maintaining auxiliary data (e.g., the maximum successor index for each vertex) if preprocessing is permitted.
\end{enumerate}

\emph{Output path format.}
The algorithms return paths as lists $P = [v_0, v_1, \dots, v_k]$ where $v_0 \in S_{\mathrm{src}}$, $v_k = t$, and $(v_{i-1}, v_i) \in E$ for all $i$. The output tape writes these vertices in order, contributing $O(|P|)$ output size but no additional work-tape space.

\medskip

\noindent\textbf{Example: Grid alignment.}
For an $m \times n$ grid DP with $m \le n$ (e.g., global sequence alignment), the grid is topologically ordered column-by-column (or row-by-row), with $T = (m+1)(n+1)$ and $\omega = m+1$.

\begin{itemize}
  \item \emph{Global forward DP} (Algorithm~\ref{alg:global-forward}) sweeps through all $(m+1)(n+1)$ cells in $O(mn)$ time using a single column buffer of size $m+1$.
  \item \emph{Recursive traceback} (Algorithm~\ref{alg:recursive-traceback}) splits $I = [1, (m+1)(n+1)]$ at the midpoint, corresponding roughly to column $\lfloor n/2 \rfloor$. The middle frontier $F^{\mathrm{mid}}_I$ consists of all cells $(i, \lfloor n/2 \rfloor)$ for $0 \le i \le m$.
  \item \emph{Prefix computation} (Algorithm~\ref{alg:prefix-values}) performs one forward DP pass over columns $0, \dots, \lfloor n/2 \rfloor$ with boundary at the left edge.
  \item \emph{Suffix computation} (Algorithm~\ref{alg:suffix-values}) runs a forward DP for each frontier cell $(i, \lfloor n/2 \rfloor)$ over columns $\lfloor n/2 \rfloor+1, \dots, n$ with that cell as boundary.
  \item \emph{Midpoint selection} (Algorithm~\ref{alg:select-midpoint}) chooses the row $i^\star$ maximizing the combined value.
  \item \emph{Recursion} splits into subproblems on the left and right halves of the grid, each with half the column range.
\end{itemize}

The recursion depth is $O(\log n)$, and each level reuses the same $O(m)$ workspace, giving total space $O(m + \log n)$ words as in Corollary~\ref{cor:asymmetric-alignment}.

\end{document}